\documentclass[11pt,reqnoa,4paper]{amsart}

\usepackage{amscd,amssymb,amsmath,amsthm}
\usepackage{amsmath,amsfonts,amssymb,amsthm, bm}
\usepackage{graphicx,epstopdf}
\usepackage{color}
\usepackage{cite}
\topmargin=0.1in \textwidth5.8in \textheight7.0in
\usepackage{fullpage}
\usepackage{mathrsfs} 
\usepackage[dvipsnames]{xcolor}
\usepackage[pagebackref=true,colorlinks=true, linkcolor=RoyalBlue, citecolor=RoyalBlue]{hyperref}

\newtheorem{thm}{Theorem}

\newtheorem{lemma}{Lemma}
\newtheorem{pro}{Proposition}
\newtheorem{rk}{Remark}
\newtheorem{re}{Result}

\numberwithin{equation}{section} \setcounter{tocdepth}{1}



\begin{document}
\title[Gibbs measures for the $p$-SOS model]{
Three-state $p$-SOS models on binary Cayley trees}

\author{Benedikt Jahnel \& Utkir Rozikov}

\address{B.\ Jahnel\\ Institut f\"ur Mathematische Stochastik,
	 Technische Universit\"at Braunschweig, Universit\"atsplatz 2, 38106, Braunschweig, Germany; \&
	Weierstrass Institute for Applied Analysis and Stochastics, Mohrenstrasse 39, 10117, Berlin, Germany.}
\email {benedikt.jahnel@tu-braunschweig.de}

	\address{U.\ Rozikov\\ V.I.Romanovskiy Institute of Mathematics, Uzbekistan Academy of Sciences, 9, Universitet str., 100174, Tashkent, Uzbekistan; \&  National University of Uzbekistan,  4, Universitet str., 100174, Tashkent, Uzbekistan;\&   Karshi State University, 17, Kuchabag str., 180119, Karshi, Uzbekistan.}
\email{rozikovu@yandex.ru}

\begin{abstract}
We consider a version of the solid-on-solid model on the Cayley tree of order two in which vertices carry spins of value $0,1$ or $2$ and the pairwise interaction of neighboring vertices is given by their spin difference to the power $p>0$. We exhibit all translation-invariant splitting Gibbs measures (TISGMs) of the model and demonstrate the existence of up to seven such measures, depending on the parameters.
We further establish general conditions for extremality and  non-extremality of TISGMs in the set of all Gibbs measures and use them to examine selected TISGMs for a small and a large $p$.
Notably, our analysis reveals that extremality properties are similar for large $p$ compared to the case $p=1$, a case that has been explored already in previous work. However, for the small $p$, certain measures that were consistently non-extremal for $p=1$ do exhibit transitions between extremality and non-extremality.
\end{abstract}
\maketitle

{\bf Mathematics Subject Classifications (2010).} 82B26 (primary);
60K35 (secondary)

{\bf{Key words.}} Binary Cayley tree, $p$-SOS model, 
Gibbs measure, extreme measure, tree-indexed Markov chain.

\section{Introduction}
In this paper we consider spin-configurations $\sigma$ which are functions from the vertices of a Cayley tree of order $ k\geq 1$ (that is an infinite graph without cycles such that exactly $k+1$ edges originate from each vertex) to the local configuration space $\Phi=\{0,1,\dots,m\}$, $m\geq 1$. For most of our analysis we will restrict to the case $m=2$ and the Cayley tree with $k=2$.  
By $\langle x,y\rangle$ we denote a pair of nearest-neighbor vertices. A two-parametric solid-on-solid model (called $p$-SOS) is a spin system with spins taking values in $\Phi$, and with formal Hamiltonian
\begin{align}\label{def_pSOS}
 H(\sigma)=-J\sum_{\langle x,y\rangle}
|\sigma(x)-\sigma(y)|^p,
\end{align}
where $p>0$ and $J\in \mathbb{R}$ the coupling constant.

\medskip
Let us note that, if $m=1$, i.e., $\Phi=\{0,1\}$ then, the $p$-SOS model can be reduced to the classical Ising (i.e., $2$-state Potts) model, since in this case, $|\sigma(x)-\sigma(y)|^p\in \{0,1\}$ for any $p>0$ and $\langle x,y\rangle$. A complete analysis of Gibbs measures for tree-indexed Ising and Potts models can be found for example in~\cite{Ro,RoP}. 
On the other hand, for $m\geq 2$ and $p=1$ the model becomes the classical SOS-model which is considered on the cubic lattice in~\cite{BK1,Maz} and on Cayley trees in~\cite{KR-sos,Ro12,Ro13,Ro}.
The case $p=2$ is known as the discrete Gaussian case, see for example~\cite{BP,Ve} and references therein.
In our recent paper \cite{JR1}, we investigated the case where $p = \infty$. For three states, $k = 2, 3$, and with increasing coupling strength, we demonstrated that the number of translation-invariant Gibbs measures evolves as follows:  $1\to3\to5\to6\to7$. 

%
%
In~\cite{Ro12} the SOS model (for $p=1$) is treated and a vector-valued functional equation for possible boundary laws of the model is obtained. It is known that each solution to this functional equation determines a splitting Gibbs measure (SGM) and in particular, the vertex-independent boundary laws define translation-invariant (TI) SGMs.


\medskip
In this paper, we take a similar approach and present a description of all TISGMs for the three-state ($m=2$) $p$-SOS model on the Cayley tree of order two via solutions for the fixed-point equations of the vertex-independent boundary laws. This is a non-trivial extension of the analysis presented in~\cite{KR-sos} which deals only with the case $p=1$.

\medskip
In the recent paper~\cite{CKL} the authors consider $p$-SOS models with spin values in $\mathbb Z$ on Cayley trees of large degrees $k$. There, a family of extremal tree-automorphism non-invariant Gibbs measures is presented that arises
as low temperature perturbations of ground states. Moreover, the extremality of low-temperature states in the set of all Gibbs measures is shown. Our considerations on the
binary tree provide an extension of what was previously known for $p$-SOS model.

\medskip
The paper is organized as follows. In Section~\ref{sec_2} we introduce the general setup and present the defining functional equations for the $p$-SOS model. In Section~\ref{sec_3} we present the description of all TISGMs and show that their number can be up to seven, for any given parameters $p>0$ and $\theta=\exp(J)$. Finally, in Section~\ref{sec_5} we study the extremality questions for TISGMs and use the methods of~\cite{KR,KR-sos} based on the Kesten--Stigum's non-extremality condition \cite{Ke} and the Martinelli--Sinclair--Weitz's extremality condition, see~\cite{MSW}.

\section{Setup and functional equations}\label{sec_2}

We denote by $\Gamma^k=(V, L)$ the {\em Cayley tree} of order $k\geq 1$, where $V$ is the set of {\em vertices} and  $L$ the set of {\em edges}.
A collection of nearest-neighbor pairs of vertices $\langle x,x_1\rangle, \langle x_1,x_2\rangle,...,\langle x_{d-1},y\rangle$ is called a {\em
path} from $x$ to $y$. The distance $d(x,y)$ on the Cayley tree is the number of edges of the shortest path from $x$ to $y$.
For a fixed $x^0\in V$, called the {\em root}, we set
\begin{equation*}
W_n:=\{x\in V\,| \, d(x,x^0)=n\}, \qquad V_n:=\bigcup_{m=0}^n W_m
\end{equation*}
and denote by
$$
S(x):=\{y\in W_{n+1} :  d(x,y)=1 \}, \ \ x\in W_n, $$ the set  of {\em direct successors} of $x$.
%
Next, we consider real vector-valued function from $V\setminus
\{x^0\}$ to $\mathbb{R}^{m+1}$ given as
$$h\colon x\mapsto h_x=(h_{0,x}, h_{1,x},\dots,h_{m,x}),$$ 
and the corresponding probability distributions $\mu^{(n)}$ on
$\Phi^{V_n}$, the set of all configuration given on $V_n$, $n\in \mathbb N$, defined as
\begin{equation}\label{rs2.1}
\mu^{(n)}(\sigma_n):=Z_n^{-1}\exp\Big(-H(\sigma_n)
+\sum_{x\in W_n}h_{\sigma(x),x}\Big).
\end{equation}
Here, $\sigma_n\colon x\in V_n\mapsto \sigma(x)\in \{0,\dots,m\}$
and $Z_n$ is the corresponding partition function \begin{equation}\label{rs2.2}
Z_n:=\sum_{{\widetilde\sigma}_n\in\Phi^{V_n}}
\exp\Big(-H({\widetilde\sigma}_n)
+\sum_{x\in W_n}h_{{\widetilde\sigma}(x),x}\Big).
\end{equation}
We say that the sequence of probability distributions $(\mu^{(n)})_{n\ge 1}$
are {\em compatible} if for all $n\geq 1$ and $\sigma_{n-1}\in\Phi^{V_{n-1}}$ we have that
\begin{equation}\label{rs2.3}
\sum_{\omega_n\in\Phi^{W_n}}\mu^{(n)}(\sigma_{n-1}\vee\omega_n)=
\mu^{(n-1)}(\sigma_{n-1}).
\end{equation}
Here $\sigma_{n-1}\vee\omega_n\in\Phi^{V_n}$ is the concatenation
of $\sigma_{n-1}$ and $\omega_n$.
If this is the case, then, by the Kolmogorov extension theorem, there exists a unique measure $\mu$ on
$\Phi^V$ such that, for all $n$ and
$\sigma_n\in\Phi^{V_n}$, 
$$\mu(\{\sigma
|_{V_n}=\sigma_n\})=\mu^{(n)}(\sigma_n).$$ 
Such
a measure is called a {\em splitting Gibbs measure} (SGM) corresponding to the Hamiltonian $H$ and function $x\mapsto h_x$, $x\neq x^0$.
Let us note that these measures are also called {\em Markov chains} for example in~\cite{Pr}.

\medskip
Let us now turn our attention to the $p$-SOS model with Hamiltonian defined in~\eqref{def_pSOS}. The following statement presents a system of functional equations whose solutions correspond to (infinite-volume) Gibbs measures of the $p$-SOS model on Cayley trees. Let us note that every extremal Gibbs measure also arises
in this way, but not necessarily every measure which arises in this way is extremal,
see~\cite[Chapter 12]{Ge}.
In other words, the following statement describes conditions on $h_x$ guaranteeing compatibility of distributions $\mu^{(n)}(\sigma_n)$.
\begin{pro}\label{rsp2.1} Probability distributions
$\mu^{(n)}$, $n=1,2,\ldots$, in~\eqref{rs2.1} are compatible iff, for any $x\in V\setminus\{x^0\}$,
the following equation holds,
\begin{equation}\label{rs2.4}
 h^*_x=\sum_{y\in S(x)}F(h^*_y,m,\theta).
 \end{equation}
Here $\theta:=\exp(J)$ and $h^*_x$ stands for the vector
$(h_{0,x}-h_{m,x}, h_{1,x}-h_{m,x},\dots,h_{m-1,x}-h_{m,x})$ and
the vector-valued function $F(\cdot,m,\theta )\colon\mathbb{R}^m\to \mathbb{R}^m$
is $F(h,m,\theta ):=
(F_0(h,m,\theta ),\dots,F_{m-1}(h,m,\theta))$, with
\begin{equation}\label{rs2.6}
F_i(h,m,\theta ):=\ln{\sum_{j=0}^{m-1}
\theta^{|i-j|^p}\exp(h_j)+\theta^{(m-i)^p}\over
\sum_{j=0}^{m-1}\theta^{(m-j)^p}\exp(h_j)+1},
\end{equation}
where $h:=(h_0,h_1,\dots,h_{m-1}),
i=0,\dots ,m-1.$
\end{pro}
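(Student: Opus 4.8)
The plan is to establish the compatibility condition by unfolding the definition of $\mu^{(n)}$ and carrying out the marginalization in~\eqref{rs2.3} explicitly, exploiting the tree structure. First I would fix $n\ge 1$ and a configuration $\sigma_{n-1}\in\Phi^{V_{n-1}}$, and split the Hamiltonian $H(\sigma_{n-1}\vee\omega_n)$ into the part supported on $V_{n-1}$ and the part coming from edges between $W_{n-1}$ and $W_n$; since the Cayley tree has no cycles, every vertex in $W_n$ has a unique parent in $W_{n-1}$, so $H(\sigma_{n-1}\vee\omega_n)-H(\sigma_{n-1})=-J\sum_{x\in W_{n-1}}\sum_{y\in S(x)}|\sigma(x)-\omega(y)|^p$. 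Likewise the boundary term $\sum_{x\in W_n}h_{\sigma(x),x}$ factorizes over the successors. Consequently the sum over $\omega_n\in\Phi^{W_n}$ in~\eqref{rs2.3} factorizes as a product over $x\in W_{n-1}$ of local sums $\sum_{y\in S(x)}\big(\sum_{j=0}^m\theta^{|\sigma(x)-j|^p}\exp(h_{j,y})\big)$, and I would denote the resulting product, together with $Z_{n-1}/Z_n$, by a single $\sigma_{n-1}$-independent constant times the $\sigma(x)$-dependent factors.

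Next I would impose~\eqref{rs2.3} itself: comparing with $\mu^{(n-1)}(\sigma_{n-1})=Z_{n-1}^{-1}\exp(-H(\sigma_{n-1})+\sum_{x\in W_{n-1}}h_{\sigma(x),x})$, cancelling the common $Z_{n-1}^{-1}\exp(-H(\sigma_{n-1}))$, the identity~\eqref{rs2.3} becomes, for every $\sigma_{n-1}$,
\begin{equation*}
\exp\Big(\sum_{x\in W_{n-1}}h_{\sigma(x),x}\Big)=(\text{const})\prod_{x\in W_{n-1}}\prod_{y\in S(x)}\Big(\sum_{j=0}^m\theta^{|\sigma(x)-j|^p}\exp(h_{j,y})\Big).
\end{equation*}
Since the values $\sigma(x)$ for distinct $x\in W_{n-1}$ can be chosen independently, this single identity decouples into one identity per vertex $x\in W_{n-1}$: there is a constant $c_x>0$ with $\exp(h_{i,x})=c_x\prod_{y\in S(x)}\big(\sum_{j=0}^m\theta^{|i-j|^p}\exp(h_{j,y})\big)$ for all $i\in\{0,\dots,m\}$. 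This is the ``only if'' direction; for the ``if'' direction I would run the same computation in reverse, checking that if~\eqref{rs2.4} holds then the constants can be absorbed into $Z_n$ so that~\eqref{rs2.3} is satisfied, hence $(\mu^{(n)})$ is compatible.

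It remains to pass from the multiplicative, constant-dependent relation on $h_{i,x}$ to the normalized form~\eqref{rs2.4}. Here I would divide the $i$-th equation by the $m$-th equation to eliminate $c_x$, obtaining
\begin{equation*}
\exp(h_{i,x}-h_{m,x})=\prod_{y\in S(x)}\frac{\sum_{j=0}^m\theta^{|i-j|^p}\exp(h_{j,y})}{\sum_{j=0}^m\theta^{|m-j|^p}\exp(h_{j,y})},
\end{equation*}
then split off the $j=m$ term in numerator and denominator and divide through by $\exp(h_{m,y})$, which rewrites each factor in terms of the differences $h^*_{j,y}=h_{j,y}-h_{m,y}$ and matches exactly $\exp(F_i(h^*_y,m,\theta))$ with $F_i$ as in~\eqref{rs2.6} (using $|i-m|^p=(m-i)^p$ and $|m-j|^p=(m-j)^p$). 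Taking logarithms and summing appropriately gives~\eqref{rs2.4}.

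The computations are all routine; the one place that needs care — and what I would call the main obstacle — is the decoupling step, i.e.\ justifying rigorously that a single product identity holding for all $\sigma_{n-1}\in\Phi^{V_{n-1}}$ forces the per-vertex identities with a common constant. The point is that the left side depends on $\sigma_{n-1}$ only through $(\sigma(x))_{x\in W_{n-1}}$ and each factor on the right depends only on the corresponding $\sigma(x)$, so fixing all coordinates but one and varying that coordinate over $\Phi$ isolates each vertex; one must also track that the ``constant'' is genuinely independent of $\sigma_{n-1}$, which follows because the $Z_n$-ratio and the unused parts of the product do not see $\sigma(x)$ once we have divided the $i$-equation by the $m$-equation. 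A secondary bookkeeping point is making sure the exponent algebra $|i-m|^p=(m-i)^p$ is applied consistently so that the final $F_i$ is literally~\eqref{rs2.6} and not merely equal to it up to an additive constant.
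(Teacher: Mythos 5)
Your proposal is correct and follows exactly the route the paper intends: the paper's proof of Proposition~\ref{rsp2.1} is just a pointer to the analogous Proposition~2.1 of~\cite{Ro12}, whose argument is precisely your marginalization over $W_n$, factorization over the successors $S(x)$, decoupling of the resulting product identity vertex by vertex, and normalization by dividing the $i$-th relation by the $m$-th to eliminate the constants $c_x$ and obtain~\eqref{rs2.4} with $F_i$ as in~\eqref{rs2.6}. The only blemish is a typo in your prose where the local factor is written as a sum over $y\in S(x)$ rather than the product $\prod_{y\in S(x)}\big(\sum_{j=0}^m\theta^{|\sigma(x)-j|^p}\exp(h_{j,y})\big)$, which your subsequent displayed equation states correctly.
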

\begin{proof} The proof is similar to the proof of~\cite[Proposition 2.1.]{Ro12}. 
	\end{proof}
From Proposition~\ref{rsp2.1} it follows that for any $h=\{h_x\colon x\in V\}$
satisfying~\eqref{rs2.4} there exists a unique SGM $\mu$ for the $p$-SOS model. However,
the analysis of solutions to~\eqref{rs2.4} for an arbitrary $m$ is challenging. We therefore restrict our attention to a smaller class of measures, namely the translation-invariant SGMs.

\medskip
It is natural to begin with translation-invariant solutions
where $h_x=h\in \mathbb{R}^m$ is independent of $x$. In this case~\eqref{rs2.4} becomes
\begin{equation}\label{ti0}
z_i=\left(\sum_{j=0}^{m-1}
\theta^{|i-j|^p}z_j+\theta^{(m-i)^p}\over
\sum_{j=0}^{m-1}\theta^{(m-j)^p}z_j+1\right)^k,\qquad i=0,\dots, m-1,
\end{equation}
where $z_i=\exp(h_i)$. The vector $(z_0,\dots, z_{m-1})$ is called a (translation-invariant) {\em law} and the non-translation-invariant quantities $l_x(i)=\exp(h_{i,x})$
are the {\em boundary laws}, see~ \cite{Za} and \cite[pp.~242]{Ge}.
%
In the present manuscript we present a full analysis of
the solutions of the system~\eqref{ti0} for the case where $k=2$, $p>0$, $m=2$ and additionally study extremality properties of the corresponding TISGMs.  
In particular, we extend the results in~\cite{KR-sos} for general $p>0$, which were obtained there only for $p=1$. 
\section{The case $k=m=2$: complete analysis of solutions}\label{sec_3}

Assuming $k=m=2$, the two-dimensional fixed-point
equation~\eqref{ti0} for the two components of the boundary law can be written in terms of the variables $x=\sqrt{z_0}$ and $y=\sqrt{z_1}$ in the form
\begin{equation}\label{rs3.2a}
x={x^2+\theta y^2+\theta^{2^p} \over \theta^{2^p}x^2+\theta y^2+1},
\end{equation}
\begin{equation}\label{rs3.2b}
 y={\theta x^2+y^2+\theta \over \theta^{2^p}x^2+\theta y^2+1}.
\end{equation}
From~\eqref{rs3.2a} we get $x=1$ or
\begin{equation}\label{y}
\theta y^2=(1-\theta^{2^p})x-\theta^{2^p}(x^2+1).
\end{equation}
\begin{rk}\label{<1} 
Since $x>0$ we have that~\eqref{y} can hold iff $\theta<1$.
\end{rk}
\subsection{Case: $x=1$.} In this case, from~\eqref{rs3.2b}, we get
\begin{equation}\label{y3}
\theta y^3-y^2+(\theta^{2^p}+1)y-2\theta=0.
\end{equation}
Using Cardano's formula, the discriminant of the cubic equation~\eqref{y3} can be written as 
\begin{equation}\label{De}
	\Delta:= \Delta(\theta,p):={1\over 27\theta^2}\left[ 4(1-3\theta-3\theta^{2^p+1})^3-(2-9\theta+54\theta^3-9\theta^{2^p+1})^2\right].
	\end{equation}
From this we get the following statement.
\begin{lemma}\label{l1} The following assertions hold
\begin{itemize}
\item If $\Delta>0$, then~\eqref{y3} has three solutions $0<y_3<y_2<y_1$.
\item If $\Delta=0$, then~\eqref{y3} has two solutions $0<y_2<y_1$.
\item If $\Delta<0$, then~\eqref{y3} has one solution $0<y_1$.
\end{itemize}
\end{lemma}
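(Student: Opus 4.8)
My plan is to reduce \eqref{y3} to counting the \emph{distinct real} roots of the cubic $g(y):=\theta y^{3}-y^{2}+(\theta^{2^{p}}+1)y-2\theta$, then to verify that $\Delta$ in \eqref{De} is precisely the discriminant of $g$, and finally to read off the three cases from the classical trichotomy for real cubics. The first step is to localize all roots in $(0,\infty)$. Since $\theta=\exp(J)>0$, every coefficient of $g(-y)=-\theta y^{3}-y^{2}-(\theta^{2^{p}}+1)y-2\theta$ is strictly negative, so $g(y)<0$ for all $y\le 0$; in particular $g(0)=-2\theta\neq 0$, so $0$ is not a root. Because $g(y)\to+\infty$ as $y\to+\infty$, there is at least one positive root, and there are no roots with $y\le 0$. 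Hence the number of positive solutions of \eqref{y3} equals the number of distinct real roots of $g$.

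Next I would identify $\Delta$ with the discriminant of $g$. Setting $y=t+\tfrac{1}{3\theta}$ depresses $g/\theta$ to $t^{3}+Pt+Q$ with $P=-\tfrac{1-3\theta-3\theta^{2^{p}+1}}{3\theta^{2}}$ and $Q=-\tfrac{2-9\theta+54\theta^{3}-9\theta^{2^{p}+1}}{27\theta^{3}}$, and a one‑line computation gives $\theta^{4}(-4P^{3}-27Q^{2})=\Delta$; equivalently, substitute $(a,b,c,d)=(\theta,-1,\theta^{2^{p}}+1,-2\theta)$ into $18abcd-4b^{3}d+b^{2}c^{2}-4ac^{3}-27a^{2}d^{2}$. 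So $\operatorname{sign}\Delta$ is the sign of the discriminant of $g$, and the standard classification of real cubics yields: $\Delta>0$ iff $g$ has three distinct real roots, $\Delta=0$ iff $g$ has a real multiple root, and $\Delta<0$ iff $g$ has one real and two complex conjugate roots. (I would also remark that the bracket $1-3\theta-3\theta^{2^{p}+1}$ occurring in \eqref{De} is, up to a positive factor, the discriminant of $g'$, so its sign records whether $g$ has two, one, or no critical points; this gives an alternative, purely monotonicity‑based route to the same trichotomy via $g(0)<0$ and the values of $g$ at its local extrema.)

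Combining the two steps finishes the argument: if $\Delta>0$ the three distinct real roots are all positive by the first step and I relabel them as $0<y_{3}<y_{2}<y_{1}$; if $\Delta<0$ there is exactly one real root, necessarily positive, which I call $y_{1}$; if $\Delta=0$ then $g(y)=\theta(y-a)^{2}(y-b)$ with $a,b$ real, hence $a,b>0$ again by the first step, and it remains only to exclude $a=b$. The one point needing care is exactly this last possibility: a triple root would force simultaneously $a=\tfrac{1}{3\theta}$ (quadratic coefficient), $a^{3}=2$ (constant term) and $3\theta a^{2}=\theta^{2^{p}}+1$ (linear coefficient), i.e.\ $\theta=2^{-1/3}/3$ together with $\theta^{2^{p}}=2^{1/3}-1$, which pins $p$ to a single value; away from this isolated parameter pair one gets $a\neq b$ and hence exactly two distinct positive solutions $0<y_{2}<y_{1}$. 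Everything else is the routine cubic‑discriminant bookkeeping from the first two steps.
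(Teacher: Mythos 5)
Your argument is correct and follows essentially the same route as the paper: the paper likewise combines the cubic-discriminant trichotomy with the observation that the sign pattern of the coefficients (Descartes' rule applied to $y$ and to $-y$, plus $g(0)=-2\theta\neq 0$) forces every real root to be positive. Your extra step excluding a triple root when $\Delta=0$ is a refinement the paper omits; note that the system you derive actually does admit a solution with $p>0$ (namely $\theta=2^{-1/3}/3$ and $\theta^{2^{p}}=2^{1/3}-1$, giving $p\approx 0.019$), so at that single isolated parameter pair the second bullet degenerates to one solution rather than two.
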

\begin{proof} It is well known that the number of real roots for a cubic equation is determined by the value of $\Delta$. It remains to show that all real roots are positive. Descartes' rule of signs is very helpful to count the number of the positive roots of a polynomial. In~\eqref{y3} the sign of the coefficients changes three times, and hence, there might be one positive root or three positive roots. If $\Delta<0$,
then the unique real root is positive. If $\Delta\geq 0$,
then the rule of signs applied to the cubic equation, with $-y$
as a substitute, tells us that there are no negative roots, and since we can not have $\theta=0$, there must be three positive roots.
\end{proof}

\subsection{Case: $x\ne 1$ and~\eqref{y} satisfied.} By Remark~\ref{<1}, we only consider the case $\theta<1$ and
\eqref{rs3.2b} can be written as
\begin{equation}\label{2b}
 y^2=\left({\theta x^2+y^2+\theta \over \theta^{2^p}x^2+\theta y^2+1}\right)^2.
\end{equation}
In the case, when~\eqref{y} is satisfied,
from~\eqref{2b} we get that
\begin{equation}\label{x4}((1-\theta^{2^p})x-\theta^{2^p}(x^2+1))\theta=\left({(\theta^2-\theta^{2^p})(x^2+1)+(1-\theta^{2^p})x\over (1-\theta^{2^p})(x+1)}\right)^2.
\end{equation}
Let $\xi:=x+1/x$ and note that $\xi>2$ if $x>0$. Then, from~\eqref{x4} we get
\begin{equation}\label{xi}
a\xi^2+b\xi+c=0,
\end{equation}
where 
$$\begin{array}{lll}
a:=a(\theta, p):=\theta^{2^p+1}(1-\theta^{2^p})^2+(\theta^2-\theta^{2^p})^2,\\[2mm]
b:=b(\theta,p):=(1-\theta^{2^p})[2(\theta^2-\theta^{2^p})-\theta(1-\theta^{2^p})(1-3\theta^{2^p})],\\[2mm]
c:=c(\theta, p):=(1-\theta^{2^p})^2[1-2\theta(1-\theta^{2^p})].
\end{array} $$
This equation has no solution if $D=b^2-4ac<0$, it has a
unique solution if $D=0$ and two solutions if $D>0$.
Thus we have the following statements.
\begin{itemize}
	\item If $D>0$, then~\eqref{xi} has two solutions $\xi_1(\theta,p)<\xi_2(\theta,p)$ (below we denote $q=1-\theta^{2^p}$) given by
	\begin{equation}\label{12} 
 \xi_{1,2}(\theta,p):={q\over 2}{-3\theta q^2+2(\theta+1)q+2(\theta^2-1)\mp\theta\sqrt{q(q+2\theta-2)[(q-\theta-1)^2+(\theta+1)(3\theta-1)]}\over (q-\theta-1)[\theta q^2+(\theta^2-1)(q+\theta-1)]}.
		\end{equation}
	\item If $D=0$, then~\eqref{xi} has a unique solution
 $$\xi_{1,2}:={q\over 2}{-3\theta q^2+2(\theta+1)q+2(\theta^2-1)\over (q-\theta-1)[\theta q^2+(\theta^2-1)(q+\theta-1)]}.$$
	\item If $D<0$, then~\eqref{xi} has no solution.
\end{itemize}
For $D=0$ we have 
\begin{equation}
	\label{r}
	D(\theta,p)=\theta^2(\theta^{2^p}-1)^3(\theta^{2^p}-2\theta+1)
	((\theta^{2^p}+\theta)^2+3\theta^2+2\theta-1)=0.
\end{equation}
Define 
\begin{align*}
l(\theta, p):=\theta^{2^p}-2\theta+1\qquad\text{and}\qquad
q(\theta, p):=(\theta^{2^p}+\theta)^2+3\theta^2+2\theta-1.
\end{align*}
Since $0<\theta<1$ we see that $D>0$ if (see Fig.~\ref{ey5})
\begin{equation}\label{lq}
	l(\theta,p)q(\theta,p)<0.
\end{equation}
\begin{figure}[h]
	\begin{center}
		\includegraphics[width=8.3cm]{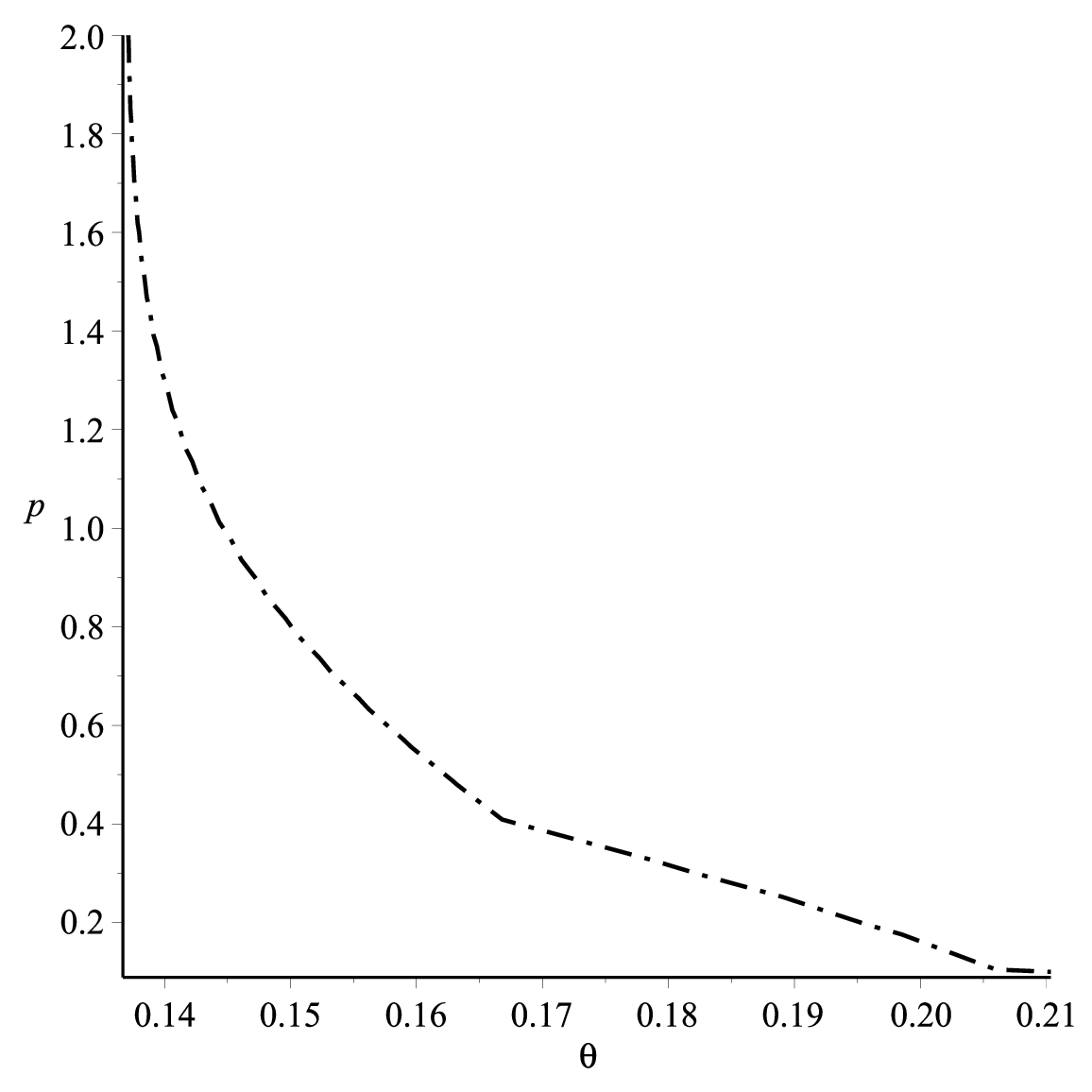}
			\includegraphics[width=5cm]{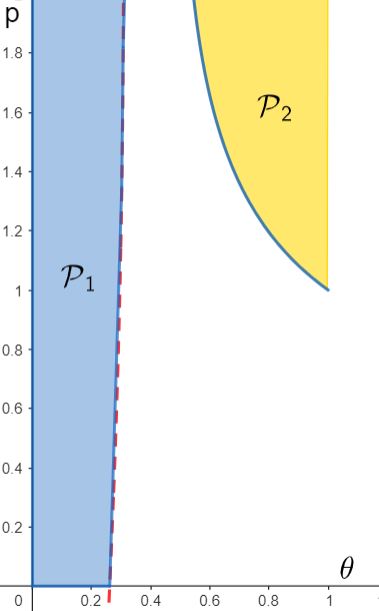}
	\end{center}
	\caption{{\it Left}: The dash-dot curve defined by $\Delta=0$ with~\eqref{De}. {\it Right}: The areas of $\mathcal P_1$ and $\mathcal P_2$, where the dashed (red) curve is $q(\theta, p)=0$ and the boundary (blue) curve of $\mathcal P_2$ is $l(\theta,p)=0$ defined by~\eqref{r}.}\label{ey5}
\end{figure}

Next, define 
\begin{align*}
m(\theta)&={1\over \ln2}\ln\left({\ln(-\theta+\sqrt{(\theta+1)(1-3\theta)})\over \ln \theta}\right), \ \ \theta\in \left(0, {\sqrt{5}-1\over 4}\right),\text{ and}\\
M(\theta)&={1\over \ln2}\ln\left({\ln(2\theta-1)\over \ln \theta}\right), \ \ \theta\in \left({1\over 2}, 1\right).
\end{align*}
We solve~\eqref{lq} with respect to $p$ and get the following solution,
\begin{equation}
	\mathcal P:=\mathcal P_1\cup \mathcal P_2:=\left\{(\theta, p) : 0<\theta<\tfrac{\sqrt{5}-1}{4}, p>m(\theta)\right\}
	\bigcup \left\{(\theta, p) : \tfrac{1}{2}<\theta<1, p>M(\theta)\right\}.
\end{equation}
Introduce (see Lemma \ref{l1}, the case $x=1$):
$$\mathcal Q_-=\left\{(\theta, p) : \Delta(\theta, p)<0\right\},\quad\mathcal Q_0=\left\{(\theta, p) : \Delta(\theta, p)=0\right\}\quad\text{and}\quad\mathcal Q_+=\left\{(\theta, p) : \Delta(\theta, p)>0\right\},$$
and note that $\mathcal Q_+\subset \mathcal P_1$.
Next, for all $\theta, p$ with $D\geq 0$ and
\begin{equation}\label{C1}
	2<\xi_1(\theta, p)\leq \xi_2(\theta, p)
\end{equation} we find all four positive solutions to~\eqref{x4} explicitly, i.e., we have
\begin{equation}\label{x4-7}
\begin{array}{ll}
		x_{4}(\theta, p):={1\over 2}(\xi_2-\sqrt{\xi_2^2-4}), \ \ x_{5}(\theta, p)={1\over 2}(\xi_1-\sqrt{\xi_1^2-4}),\\[3mm]
		x_{6}(\theta, p):={1\over 2}(\xi_1+\sqrt{\xi_1^2-4}),\ \ x_{7}(\theta, p)={1\over 2}(\xi_2+\sqrt{\xi_2^2-4}).
	\end{array}
\end{equation}
Fig.~\ref{four} presents the graphs of $x_i$, $i=4,5,6,7$.
\begin{figure}[h!]
	\begin{center}
		\includegraphics[width=9cm]{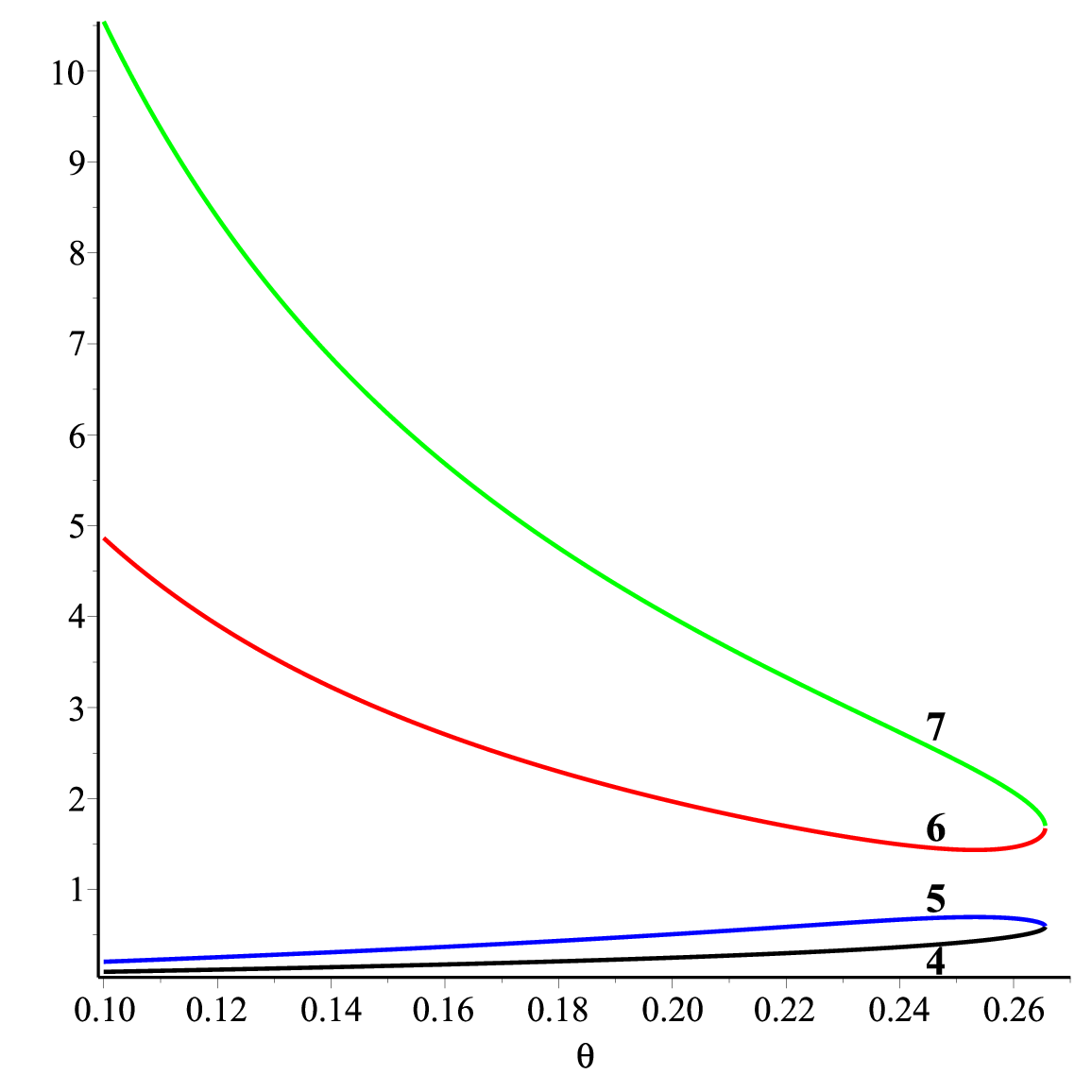}
	\end{center}
	\caption{ The graphs of the functions $x_i=x_i(\theta, 0.1)$, $i=4,5,6,7$. The coloring represents $x_4=black$, $x_5=blue$, $x_6=red$ and $x_7=green$.}\label{four}
	\begin{center}
		\includegraphics[width=9cm]{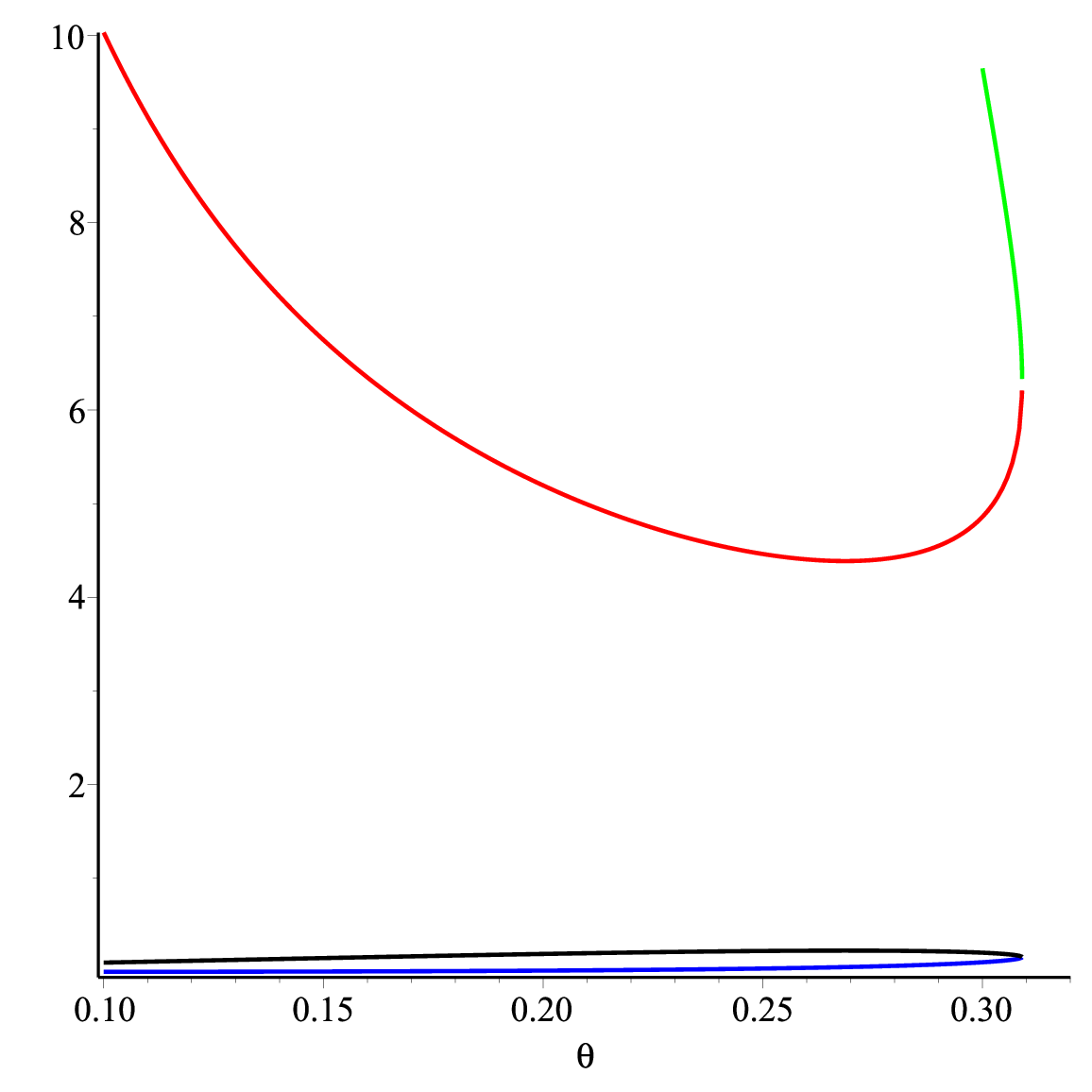}
	\end{center}
	\caption{ The graphs of the functions $x_i=x_i(\theta, 10)$, $i=4,5,6,7$. The coloring represents $x_4=blue$, $x_5=black$, $x_6=red$ and $x_7=green$.}\label{four10}
\end{figure}


\medskip
Now, for each $x_i(\theta, p)$, using the following condition on the parameters $\theta, p$,
\begin{equation}\label{C2} 
	(1-\theta^{2^p})x_i-\theta^{2^p}(x_i^2+1)\geq 0,
	\end{equation}
we define
	\begin{equation}\label{y4-7}
		y_{i}(\theta, p)=\sqrt{\theta}^{-1}\sqrt{	(1-\theta^{2^p})x_i-\theta^{2^p}(x_i^2+1)}, \ \ i=4,5,6,7.
	\end{equation}
\begin{rk} In \cite{KR-sos}, for $p=1$ it is proven that the Conditions~\eqref{C1} and~\eqref{C2} are satisfied for all values of $\theta$ where the solutions exist. But in case $p\ne 1$ we do not have such a result since the solutions have a very bulky form. Below, we consider two concrete values $p=0.1$ and $p=10$ for any $\theta\in (0,1)$. We note that for these values of $p$, the   Conditions~\eqref{C1} and~\eqref{C2} are satisfied too, see Fig.~\ref{y1-7} and~\ref{y1-710}.
\end{rk}	
\begin{figure}[h]
 \begin{center}
 \includegraphics[width=11cm]{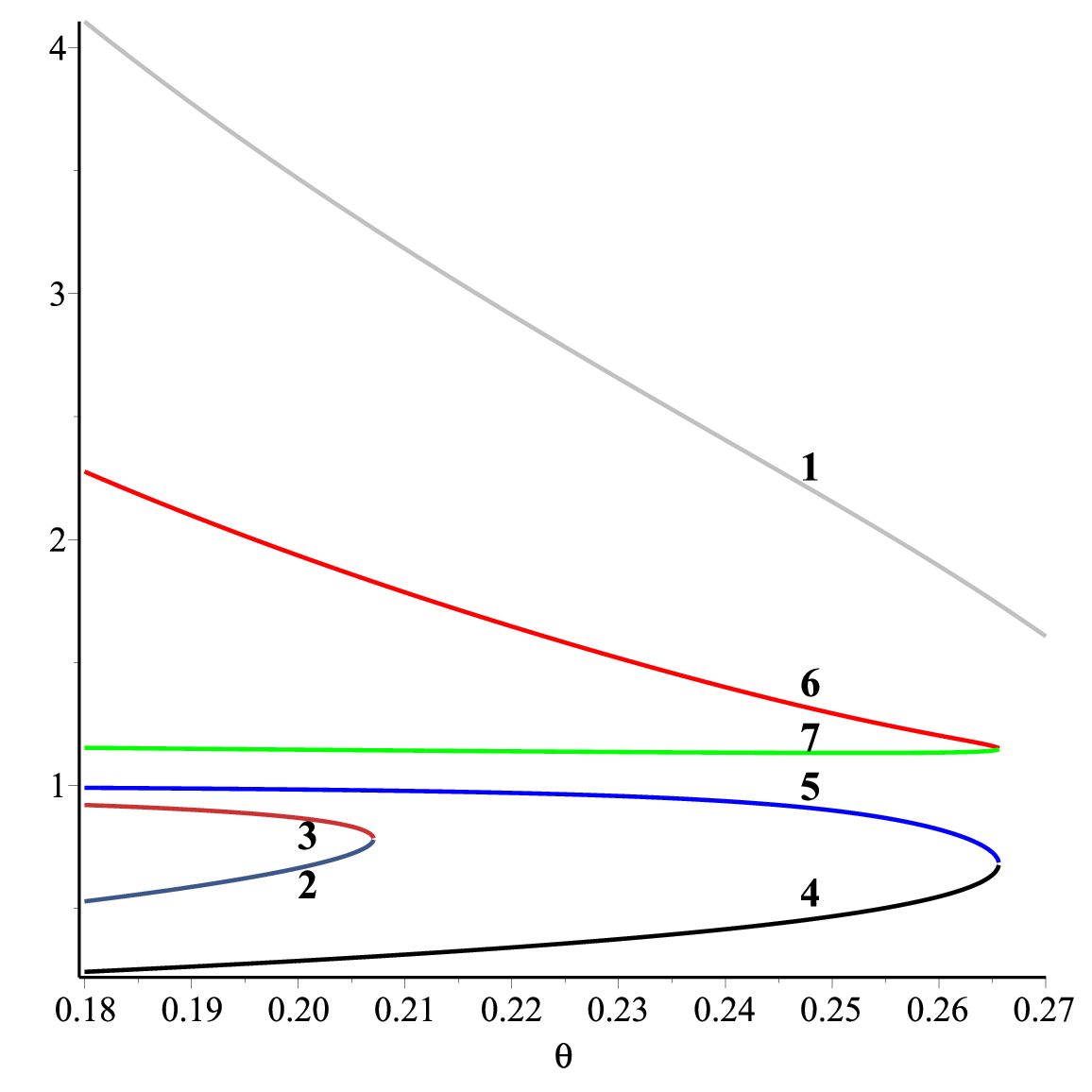}
 \end{center}
 \caption{
 	The graphs of the functions $y_i=y_i(\theta,0.1)$, $i=1,2,...,7$. Here, $y_1=grey$, $y_2=azure$, $y_3=orange$, $y_4=black$, $y_5=blue$, $y_6=red$, and $y_7=green$.}\label{y1-7}
\end{figure}

\begin{figure}[h]
	\begin{center}
\includegraphics[width=11cm]{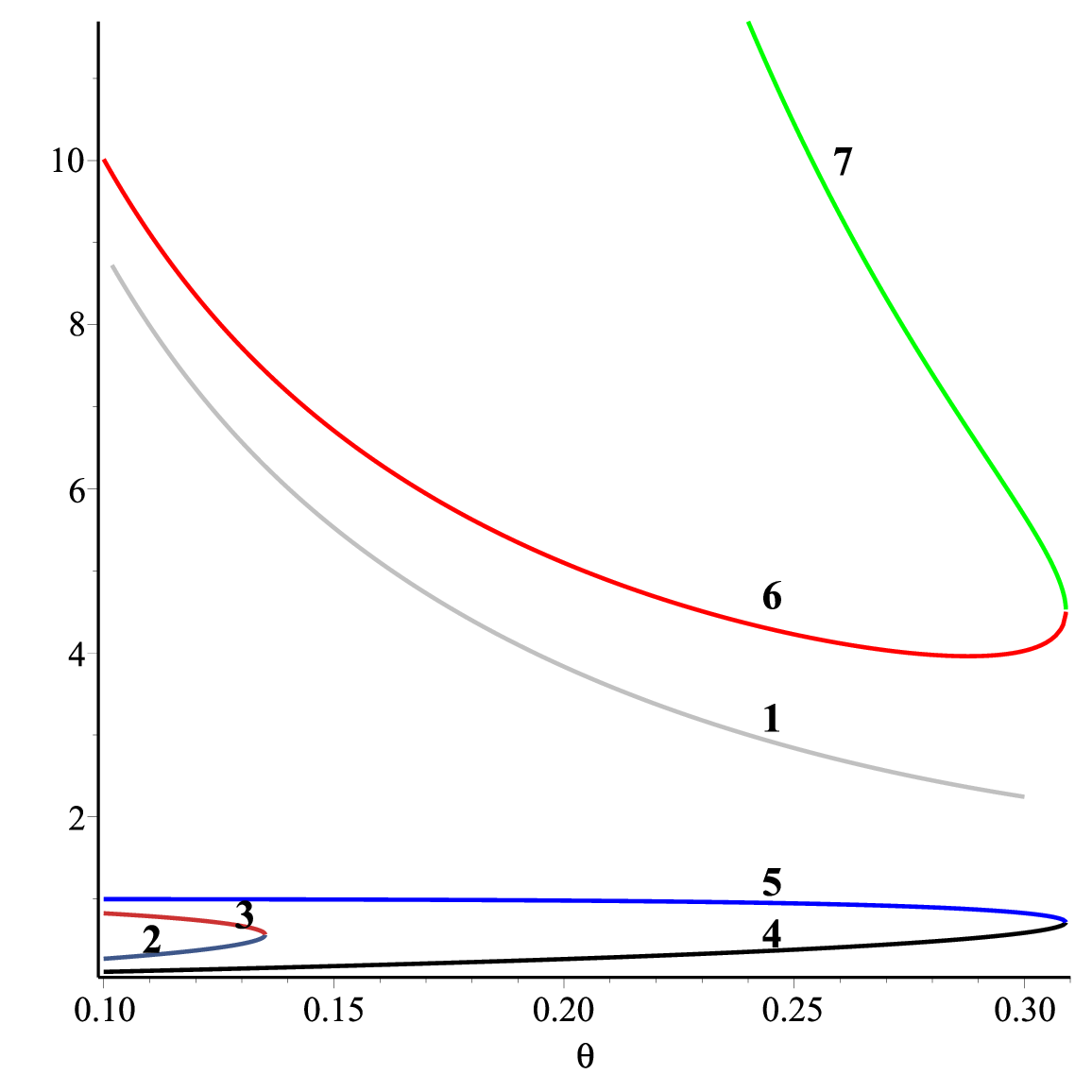}
	\end{center}
	\caption{
		The graphs of the functions $y_i=y_i(\theta,10)$, $i=1,2,...,7$. Here,  $y_1=grey$,$y_2=azure$, $y_3=orange$, $y_4=black$, $y_5=blue$, $y_6=red$, $y_7=green$.}\label{y1-710}
\end{figure}
Note that $(2\sqrt{2}-1)/7$ is the unique positive solution of $q(\theta, 0)=0$. Summarizing, we exhibit the full characterization of the solutions as follows.
 \begin{pro}\label{p2} Assume that the Conditions~\eqref{C1} and~\eqref{C2} are satisfied. Then, the 
 set $S(\theta,p)$ of solutions to the system~\eqref{rs3.2a},~\eqref{rs3.2b} 
 changes under variations of the parameters $\theta$ and $p$ as follows:
 \begin{equation}\label{ns}
 	S(\theta, p)=\left\{
 	\begin{array}{lllll}
 		\{v_1=(1, y_1)\}, \ \ \mbox{if} \ \ \theta\geq 1, \, p>0 \ \ \mbox{or} \ \
 		\theta\in \left({2\sqrt{2}-1\over 7}, 1\right), p<\min\{m(\theta), M(\theta)\} \\[2mm]
 	\{v_1, v_4=(x_4, y_4), v_6=(x_6, y_6)\}, \ \ \mbox{if} \ \ p=m(\theta) \ \ \mbox{or} \ \ p=M(\theta)\\[2mm]
 	\{v_1, v_i=(x_i,y_i), i=4,5,6,7\}, \ \ \mbox{if} \ \ (\theta, p)\in \mathcal Q_-\cap \mathcal P\\[2mm]
 	\{v_1, v_i=(x_i, y_i), i=3,4,5,6,7\}, \ \ \mbox{if} \ \ (\theta, p)\in \mathcal Q_0\\[2mm]
 \{v_i=(x_i,y_i), i=1,2,3,4,5,6,7\}, \ \ \mbox{if} \ \  (\theta, p)\in \mathcal Q_+,
 	\end{array}
 	\right.
 \end{equation}
 where $y_i$, $i=1,2,3$ are solutions of~\eqref{y3}, which can be given explicitly by Cardano's formula, $x_1=x_2=x_3=1$, $x_i$ and $y_i$ for $i=4,5,6,7$ are given by~\eqref{x4-7} and~\eqref{y4-7}.
 \end{pro}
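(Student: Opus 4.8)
The plan is to assemble Proposition~\ref{p2} by carefully bookkeeping the two disjoint families of solutions already constructed in Sections~3.1 and~3.2, and then partitioning the parameter space $\{(\theta,p):\theta>0,\ p>0\}$ according to which members of each family are present. First I would recall that every solution $(x,y)$ of~\eqref{rs3.2a}--\eqref{rs3.2b} falls into exactly one of two mutually exclusive cases: either $x=1$, governed by the cubic~\eqref{y3}, or $x\neq 1$ with~\eqref{y} satisfied, which by Remark~\ref{<1} forces $\theta<1$ and which after the substitution $\xi=x+1/x$ reduces to the quadratic~\eqref{xi}. Lemma~\ref{l1} already tells us the $x=1$ family contributes one, two, or three solutions $v_1,v_2,v_3$ according to the sign of $\Delta(\theta,p)$, i.e.\ according to membership in $\mathcal Q_-,\mathcal Q_0,\mathcal Q_+$. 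For the $x\neq 1$ family, the sign of $D=b^2-4ac$ decides the number of admissible $\xi$-roots, and the factorization~\eqref{r} identifies $\{D\geq 0,\ \theta<1\}$ (minus the boundary) with the region $\mathcal P=\mathcal P_1\cup\mathcal P_2$ through the sign condition~\eqref{lq}; when present, each such $\xi$-root $>2$ yields two values of $x$ via $x=\tfrac12(\xi\pm\sqrt{\xi^2-4})$, giving the four solutions $v_4,\dots,v_7$ listed in~\eqref{x4-7}--\eqref{y4-7}.

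Next I would merge the two counts. Since we work under the standing hypotheses~\eqref{C1} and~\eqref{C2}, whenever $D\geq 0$ all four of $x_4,\dots,x_7$ are genuine positive solutions and all four $y_i$ in~\eqref{y4-7} are well defined; on the boundary $D=0$ the two $\xi$-roots coincide, so $x_4=x_7$ and $x_5=x_6$ collapse and only the two solutions $v_4,v_6$ survive — this is exactly the line $p=m(\theta)$ on $\mathcal P_1$ or $p=M(\theta)$ on $\mathcal P_2$, by the definitions of $m(\theta)$ and $M(\theta)$ obtained from solving $l(\theta,p)q(\theta,p)=0$ for $p$. The case split in~\eqref{ns} is then the Cartesian combination of the two families: for $\theta\geq 1$ (or $\theta\in(\tfrac{2\sqrt2-1}{7},1)$ with $p<\min\{m(\theta),M(\theta)\}$, where $D<0$ and $\Delta<0$) only $v_1$ remains; on the boundary curves $p=m(\theta)$ or $p=M(\theta)$ one adds $\{v_4,v_6\}$; for $(\theta,p)\in\mathcal Q_-\cap\mathcal P$ one has $\Delta<0$ but $D>0$, giving $v_1$ together with $v_4,v_5,v_6,v_7$; for $(\theta,p)\in\mathcal Q_0$ the cubic gains a double root so $v_3$ appears and, using $\mathcal Q_+\subset\mathcal P_1$ together with the fact that $\mathcal Q_0$ lies in the closure, one gets $v_1,v_3,v_4,v_5,v_6,v_7$; and for $(\theta,p)\in\mathcal Q_+\subset\mathcal P_1$ one has all three cubic roots and all four quadratic-derived roots, hence all seven $v_1,\dots,v_7$.

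I would finish by checking the bookkeeping is exhaustive and non-overlapping: the regions $\mathcal Q_-,\mathcal Q_0,\mathcal Q_+$ partition $\{\theta>0\}$ by sign of $\Delta$, and within $\theta<1$ the region $\mathcal P$ versus its complement partitions by sign of $D$; the inclusion $\mathcal Q_+\subset\mathcal P_1$ (noted right after the definition of $\mathcal P$) is what prevents spurious cases such as "$\Delta>0$ but $D<0$", so the five lines of~\eqref{ns} really do cover every $(\theta,p)$. The point $\theta=(2\sqrt2-1)/7$, the unique positive root of $q(\theta,0)=0$, fixes the left endpoint of the $\theta$-interval in the first line, and I would verify the threshold values of $m(\theta)$ and $M(\theta)$ at the endpoints of their domains $(0,\tfrac{\sqrt5-1}{4})$ and $(\tfrac12,1)$ so that "$p<\min\{m(\theta),M(\theta)\}$" is interpreted correctly where only one of the two functions is defined.

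The main obstacle is not any single deep step but the verification that the algebraic reductions are lossless and the parameter regions genuinely tile the plane: one must be sure that (i) the passage from~\eqref{rs3.2b} to its square~\eqref{2b} and then to~\eqref{x4}--\eqref{xi} introduces no extraneous roots once~\eqref{C2} is imposed (the sign of $y$ is pinned down by~\eqref{y4-7}), (ii) the condition $\xi>2$ is automatically respected by~\eqref{C1}, and (iii) the sign analysis of $D$ via the factorization~\eqref{r} correctly translates, for $0<\theta<1$, into the clean region $\mathcal P$ through~\eqref{lq} — here one has to track the signs of the harmless factors $\theta^2$, $(\theta^{2^p}-1)^3$ (negative since $\theta<1$) to see that $D$ and $l\cdot q$ have opposite signs. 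Since the paper explicitly restricts to $p=0.1$ and $p=10$ for checking~\eqref{C1} and~\eqref{C2} (see the Remark preceding the statement), I would state the proposition under these conditions as hypotheses, exactly as written, and refer to Figures~\ref{four}--\ref{y1-710} for the numerical confirmation, so that the proof reduces to the combinatorial assembly described above.
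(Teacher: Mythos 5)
Your proposal is correct and follows essentially the same route as the paper, which presents Proposition~\ref{p2} as a direct summary of the preceding two-case analysis (the $x=1$ family counted by the sign of $\Delta$ via Lemma~\ref{l1}, and the $x\neq1$ family counted by the sign of $D$ for the quadratic~\eqref{xi} in $\xi=x+1/x$, combined over the regions $\mathcal Q_\pm,\mathcal Q_0$ and $\mathcal P$) without any further argument. One small correction to your bookkeeping: when $D=0$ the coincidence is $\xi_1=\xi_2$, hence $x_4=x_5$ and $x_6=x_7$ by~\eqref{x4-7} (not $x_4=x_7$ and $x_5=x_6$, which would force $\xi=2$, i.e.\ $x=1$); the surviving set $\{v_1,v_4,v_6\}$ is nevertheless exactly as you and the paper state.
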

We present in Fig.~\ref{four} and~\ref{y1-7} the graphs of the functions mentioned in Proposition~\ref{p2} for the case $p=0.1$ and Fig.~\ref{four10} for the case $p=10$.

\medskip
Denote by $\mu_i$ the TISGM corresponding to $v_i$, $i=1,\dots,7$.
 As an immediate corollary to Propositions \ref{rsp2.1} and \ref{p2} we get the following statement.
\begin{thm}\label{tm} 	
Assume that, for the parameters of the $p$-SOS model, the Conditions~\eqref{C1} and~\eqref{C2} are satisfied. Then, the number of TISGMs $\mathcal M(\theta,p)$
	changes under variations of the parameters $\theta$ and $p$ as follows:
	\begin{equation}\label{nsm}
		\mathcal M(\theta, p)=\left\{
		\begin{array}{lllll}
			1, \ \ \mbox{if} \ \ \theta\geq 1, \, p>0 \ \ \mbox{or} \ \
			\theta\in \big((2\sqrt{2}-1)/7), 1\big), p<\min\{m(\theta), M(\theta)\} \\[2mm]
			3, \ \ \mbox{if} \ \ p=m(\theta) \ \ \mbox{or} \ \ p=M(\theta)\\[2mm]
			5, \ \ \mbox{if} \ \ (\theta, p)\in \mathcal Q_-\cap \mathcal P\\[2mm]
			6, \ \ \mbox{if} \ \ (\theta, p)\in \mathcal Q_0\\[2mm]
			7, \ \ \mbox{if} \ \  (\theta, p)\in \mathcal Q_+
		\end{array}
		\right.
	\end{equation}
\end{thm}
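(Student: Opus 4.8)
The plan is to derive Theorem~\ref{tm} as a direct bookkeeping consequence of Proposition~\ref{p2} together with Proposition~\ref{rsp2.1}. By Proposition~\ref{rsp2.1} (specialized to translation-invariant boundary laws), there is a bijection between solutions $(x,y)$ of the fixed-point system~\eqref{rs3.2a}--\eqref{rs3.2b} with $x,y>0$ and translation-invariant splitting Gibbs measures of the $p$-SOS model: each $v_i$ yields $\mu_i$ and distinct solutions yield distinct TISGMs. Hence $\mathcal M(\theta,p)=|S(\theta,p)|$, and the theorem is obtained simply by counting the cardinality of $S(\theta,p)$ in each of the five regimes listed in~\eqref{ns}.

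First I would state explicitly that, since $(z_0,z_1)\mapsto(x,y)=(\sqrt{z_0},\sqrt{z_1})$ is a bijection on the positive quadrant, solving~\eqref{ti0} for $k=m=2$ is equivalent to solving~\eqref{rs3.2a}--\eqref{rs3.2b}, so each element of $S(\theta,p)$ corresponds to exactly one law $(z_0,z_1)$ and hence to exactly one TISGM $\mu_i$. Then I would go through the cases of~\eqref{ns} one at a time: in the first case $S(\theta,p)=\{v_1\}$ has one element, giving $\mathcal M=1$; in the second case $S=\{v_1,v_4,v_6\}$ has three, giving $\mathcal M=3$; on $\mathcal Q_-\cap\mathcal P$ we have $S=\{v_1,v_4,v_5,v_6,v_7\}$ with five elements; on $\mathcal Q_0$ we have $S=\{v_1,v_3,v_4,v_5,v_6,v_7\}$ with six; and on $\mathcal Q_+$ we have $S=\{v_1,\dots,v_7\}$ with seven. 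In each instance the count of $\mathcal M(\theta,p)$ in~\eqref{nsm} matches the count of $|S(\theta,p)|$ in~\eqref{ns}, which completes the proof.

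The only genuine point requiring care — and thus the main obstacle — is verifying that the $v_i$ appearing in a given regime are pairwise distinct, so that the cardinality is exactly as claimed rather than merely an upper bound. For $v_1=(1,y_1)$ versus $v_i=(x_i,y_i)$ with $i\ge 4$ this is immediate because $x_i\ne 1$ by construction (we are in the branch where~\eqref{y} holds and $x\ne1$). Among $v_4,v_5,v_6,v_7$ distinctness follows from the strict inequalities $2<\xi_1<\xi_2$ in the regime $D>0$ together with the fact that $t\mapsto\tfrac12(t\pm\sqrt{t^2-4})$ separates the four values in~\eqref{x4-7}; on the boundary strata ($p=m(\theta)$, $p=M(\theta)$, or $\mathcal Q_0$) one checks the listed subset is exactly the set of surviving distinct solutions, e.g. when $\xi_1=\xi_2$ the pair $x_4,x_5$ collapses as does $x_6,x_7$, leaving $v_4$ and $v_6$. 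Finally, on $\mathcal Q_0$ versus $\mathcal Q_+$ the additional solution $y_3$ of~\eqref{y3} is distinct from $y_1$ (and $y_2$) by Lemma~\ref{l1}. Granting these elementary separations, Theorem~\ref{tm} follows verbatim from Proposition~\ref{p2}.
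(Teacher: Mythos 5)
Your proposal is correct and matches the paper's approach: the paper gives no separate proof, presenting Theorem~\ref{tm} as an immediate corollary of Propositions~\ref{rsp2.1} and~\ref{p2}, i.e.\ exactly the counting argument you carry out. Your added remarks on the pairwise distinctness of the $v_i$ are a reasonable elaboration of what the paper leaves implicit, but they do not change the route.
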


\begin{rk}
Note that the first part of the first line in~\eqref{nsm} refers to the antiferromagnetic $p$-SOS model, which features only one TISGM for all $p>0$. 
\end{rk}

\section{Tree-indexed Markov chains of TISGMs.}\label{sec_5}

First note that a TISGM corresponding to a vector $v=(x,y)\in \mathbb{R}^2$ (which is a solution to the system~\eqref{rs3.2a}, \eqref{rs3.2b}) is a tree-indexed Markov chain with states $\{0,1,2\}$, see \cite[Definition 12.2]{Ge}, and for the transition matrix
\begin{equation}\label{m} {\mathbb P}:=\left(\begin{array}{ccc}
{x^2\over x^2+\theta y^2+\theta^{2^p}}&{\theta y^2\over x^2+\theta y^2+\theta^{2^p}}&{\theta^{2^p}\over x^2+\theta y^2+\theta^{2^p}}\\[3mm]
{\theta x^2\over \theta x^2+y^2+\theta}&{y^2\over \theta x^2+y^2+\theta}&{\theta\over \theta x^2+y^2+\theta}\\[3mm]
{\theta^{2^p}x^2\over \theta^{2^p}x^2+\theta y^2+1}&{\theta y^2\over \theta^{2^p}x^2+\theta y^2+1}&{1\over \theta^{2^p}x^2+\theta y^2+1}
\end{array}
\right).
\end{equation}

For each given solution $(x_i,y_i)$, $i=1,\dots,7$ of the system~\eqref{rs3.2a}, \eqref{rs3.2b}, we need to calculate the eigenvalues of $\mathbb P$. The first eigenvalue is one since we deal with a stochastic matrix, the other two eigenvalues
 \begin{equation}\label{ev}\lambda_j(x_i, y_i, \theta, p), \qquad j=1,2,
\end{equation} 	
 can be found via computer, but they have bulky formulas. 
For example, in the case $x=1$ for $y$ we have up to three possible values, as mentioned in Lemma~\ref{l1}, and the matrix~\eqref{m} has three eigenvalues, 1 and
$$\lambda_1(1, y,\theta,p)={(\theta^{2^p}-2\theta^2+1)y^2\over \theta y^4+(\theta^{2^p}+2\theta^2+1)y^2+2\theta(\theta^{2^p}+1)}\qquad\text{and}\qquad
\lambda_2(1, y,\theta, p)=
{1-\theta^{2^p}\over \theta^{2^p}+\theta y^2+1}.
$$

\subsection{Conditions for non-extremality}
A sufficient condition for non-extremality of a Gibbs measure $\mu$ corresponding
to the matrix ${\mathbb P}$ on a Cayley tree of order $k\geq 1$ is given by the Kesten--Stigum Condition
$k\lambda^2_2>1$, where $\lambda_2$ is the second largest (in absolute value) eigenvalue of ${\mathbb P}$, see~\cite{Ke}.
Using this criterion, in this section, we find the regions of the parameters $\theta$ and $p$ where the
TISGMs $\mu_i$, $i=1, 2, 3, 4, 5, 6, 7$ are not extreme in the set of all Gibbs measures.
Let us denote 
$$\lambda_{\max,i}(\theta,p):=\max\{|\lambda_1(x_i,y_i,\theta,p)|, |\lambda_2(x_i,y_i,\theta,p)|\}, \ \ i=1,\dots,7,$$
$$\eta_i(\theta,p):=2\lambda^2_{\max,i}(\theta,p)-1,\ \ i=1,\dots,7,$$
and
$$\mathbb K_i:=\{(\theta, p)\in (0,1)\times (0, +\infty): \eta_i(\theta,p)>0\}, \ \ i=1, \dots, 7.$$
Then, the Kesten--Stigum Condition provides us with the following criterion.
\begin{pro}\label{tne}
If $(\theta, p)\in \mathbb K_i$ is such that $\mu_i$ exists then, $\mu_i$ is non-extremal.
\end{pro}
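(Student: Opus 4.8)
The plan is to invoke the Kesten--Stigum non-extremality criterion for tree-indexed Markov chains essentially as a black box, and then simply verify that its hypotheses are met under the stated assumptions. Recall that a splitting Gibbs measure on the Cayley tree $\Gamma^k$ is a tree-indexed Markov chain with some transition matrix $\mathbb{P}$, and the Kesten--Stigum theorem (see~\cite{Ke}) asserts that if $k\lambda_2^2>1$, where $\lambda_2$ is the second-largest eigenvalue of $\mathbb{P}$ in absolute value, then the measure is \emph{not} extreme in the set of all Gibbs measures. So the entire proof is a matter of translating this into the notation of the proposition.

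First I would record that for $k=2$ the Kesten--Stigum condition reads $2\lambda_2^2>1$. Next, for a TISGM $\mu_i$ corresponding to a solution $v_i=(x_i,y_i)$ of the system~\eqref{rs3.2a}--\eqref{rs3.2b}, the associated transition matrix is exactly $\mathbb{P}$ from~\eqref{m} evaluated at $(x_i,y_i)$; this is the content of the remark opening Section~\ref{sec_5}, and it is where the assumption ``$\mu_i$ exists'' enters — it guarantees that $(x_i,y_i)$ is indeed a genuine positive solution, so that $\mathbb{P}$ is a well-defined stochastic matrix and $\mu_i$ is the corresponding tree-indexed Markov chain. The matrix $\mathbb{P}$ has $1$ as a (Perron) eigenvalue, and its remaining two eigenvalues are $\lambda_1(x_i,y_i,\theta,p)$ and $\lambda_2(x_i,y_i,\theta,p)$ from~\eqref{ev}. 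By definition $\lambda_{\max,i}(\theta,p)=\max\{|\lambda_1(x_i,y_i,\theta,p)|,|\lambda_2(x_i,y_i,\theta,p)|\}$ is precisely the modulus of the second-largest eigenvalue of $\mathbb{P}$ (the largest being $1$, and $\lambda_{\max,i}\le 1$ since $\mathbb{P}$ is stochastic). Hence the Kesten--Stigum condition $2\lambda_{\max,i}^2>1$, i.e.\ $\eta_i(\theta,p)=2\lambda_{\max,i}^2(\theta,p)-1>0$, is equivalent to $(\theta,p)\in\mathbb{K}_i$.

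Putting these pieces together: if $(\theta,p)\in\mathbb{K}_i$ and $\mu_i$ exists, then $\mathbb{P}$ is the transition matrix of the tree-indexed Markov chain $\mu_i$, its second-largest eigenvalue in absolute value is $\lambda_{\max,i}(\theta,p)$, and $2\lambda_{\max,i}^2(\theta,p)>1$; the Kesten--Stigum theorem then yields that $\mu_i$ is non-extremal in the set of all Gibbs measures, which is the assertion. The only genuinely non-routine point — and the place where a reader should pause — is the identification of $\lambda_{\max,i}$ with the \emph{second-largest} eigenvalue of $\mathbb{P}$: one must be sure that $1$ is the Perron eigenvalue (true since $\mathbb{P}$ has strictly positive entries, hence is primitive, so $1$ is simple and strictly dominates the others in absolute value) and that $\lambda_1,\lambda_2$ are the only other eigenvalues (true since $\mathbb{P}$ is $3\times 3$). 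Everything else is bookkeeping, and no further computation with the bulky explicit forms of $\lambda_1,\lambda_2$ is needed for the proof of the proposition itself — those formulas are only used afterward to describe the regions $\mathbb{K}_i$ concretely.
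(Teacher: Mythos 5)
Your proposal is correct and follows exactly the route the paper intends: Proposition~\ref{tne} is stated as an immediate application of the Kesten--Stigum criterion $k\lambda_2^2>1$ with $k=2$, and the paper offers no further argument beyond that citation. Your additional care in verifying that $1$ is the simple Perron eigenvalue of the strictly positive stochastic matrix $\mathbb{P}$, so that $\lambda_{\max,i}$ really is the second-largest eigenvalue modulus, is a worthwhile detail the paper leaves implicit.
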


Let us illustrate this proposition for the measures $\mu_i$, $i=1,2,3$ for two values of $p$, namely $p=0.1$ and  $p=10$. The precise choice of the values is personal taste. 
\subsubsection{Case: $p=0.1$} We note that $y_1$ exists for any $\theta>0$ (see Lemma~\ref{l1}).
For the case $y_1$, via computer analysis, one can check that there is $\theta_1\approx 0.32$ such that  
$$\lambda_{\max,1}(\theta,0.1)=\left\{\begin{array}{ll}
|\lambda_1(1,y_1,\theta,0.1)|, \ \ \mbox{if} \ \ \theta \in(0, \theta_1)\\[2mm]
|\lambda_2(1,y_1,\theta,0.1)|, \ \ \mbox{if} \ \ \theta\geq \theta_1.
\end{array}\right.$$
To see that the function  $\eta_1(\theta,0.1)$ is monotone increasing for $\theta>\tilde\theta_1\approx 1523.4$ we draw the graph of the function $\eta_1(1/\theta,0.1)$ for $\theta\in (0, 0.001)$ (see Fig.~\ref{ne1i0}). 
\begin{figure}[h]
	\begin{center}
		\includegraphics[width=9cm]{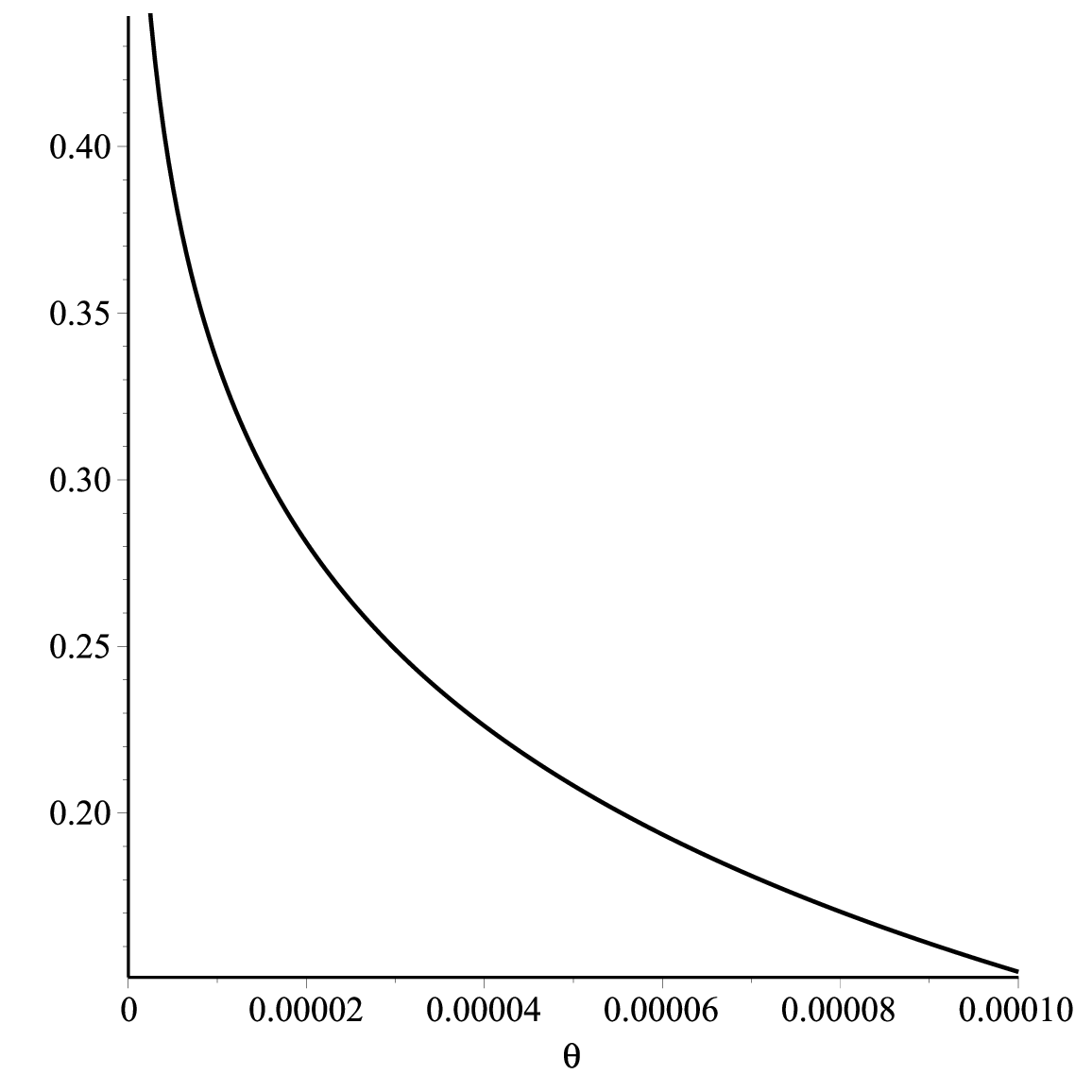}\ \ \
	\end{center}
	\caption{Graph of the function $\eta_1(1/\theta, 0.1)$, for $\theta\in(0, 0.001)$. }\label{ne1i0}
\end{figure}
This leads to the following conclusion. 
\begin{re} Considering Fig.~\ref{ne1} and~\ref{ne1i0} we conclude  that the Kesten--Stigum Condition does not hold for $(1, y_1(\theta,0.1))$, with $\theta\in (0, \tilde \theta_1]$, where $\tilde\theta_1\approx 1523.4$. However, the condition holds for $\theta>\tilde\theta_1$.
\end{re}
\begin{figure}[h]
\begin{center}
\includegraphics[width=7cm]{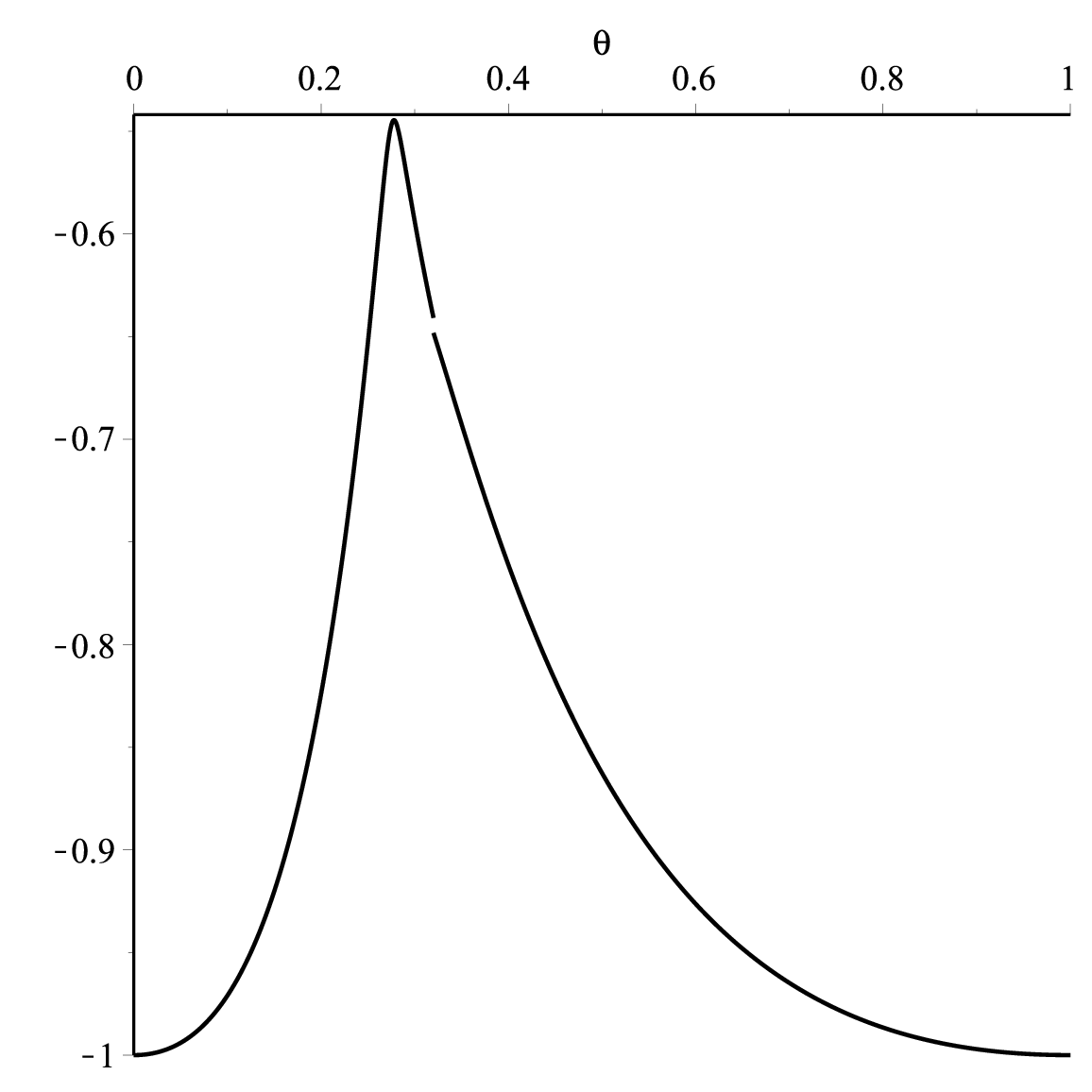} \ \ \ \ \
\includegraphics[width=7cm]{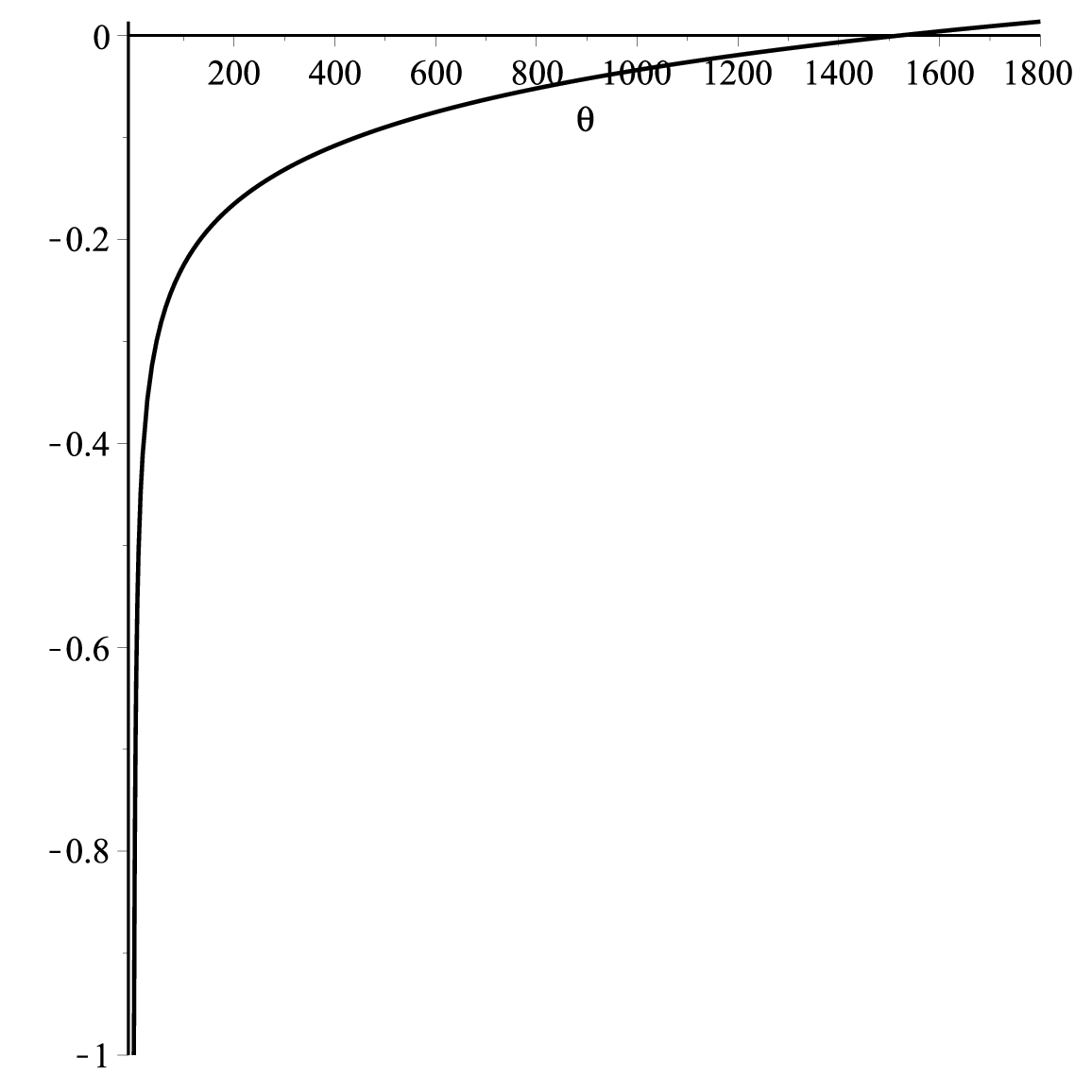}\ \ \
\end{center}
\caption{Graph of the function $\eta_1(\theta,0.1)$, for $\theta\in(0,1)$ (left) and for $\theta>1$ (right). }\label{ne1}
 \end{figure}

Next, from Fig.~\ref{y1-7}, for $p=0.1$ we know that $y_2$ and $y_3$ exist only for $\theta<\theta_2\approx 0.206$ and thus, using again computer algebra, one can see that 
$$\lambda_{\max,i}(\theta,0.1)=|\lambda_2(1,y_i,\theta,0.1)|, \ \ \forall \theta\in (0, \theta_2), i=2,3.$$
 
\begin{re} Considering Fig.~\ref{ne2}, we see that there are $\hat \theta_2, \hat\theta_3<\theta_2$ ($\hat \theta_2\approx 0.175$, $\hat \theta_3\approx 0.139$) such that the Kesten--Stigum Condition holds for solutions  $(1, y_i(\theta,0.1))$, with $\theta\in (0, \hat\theta_i)$, $i=2,3$.
\end{re}
\begin{figure}[h]
	\begin{center}
		\includegraphics[width=7cm]{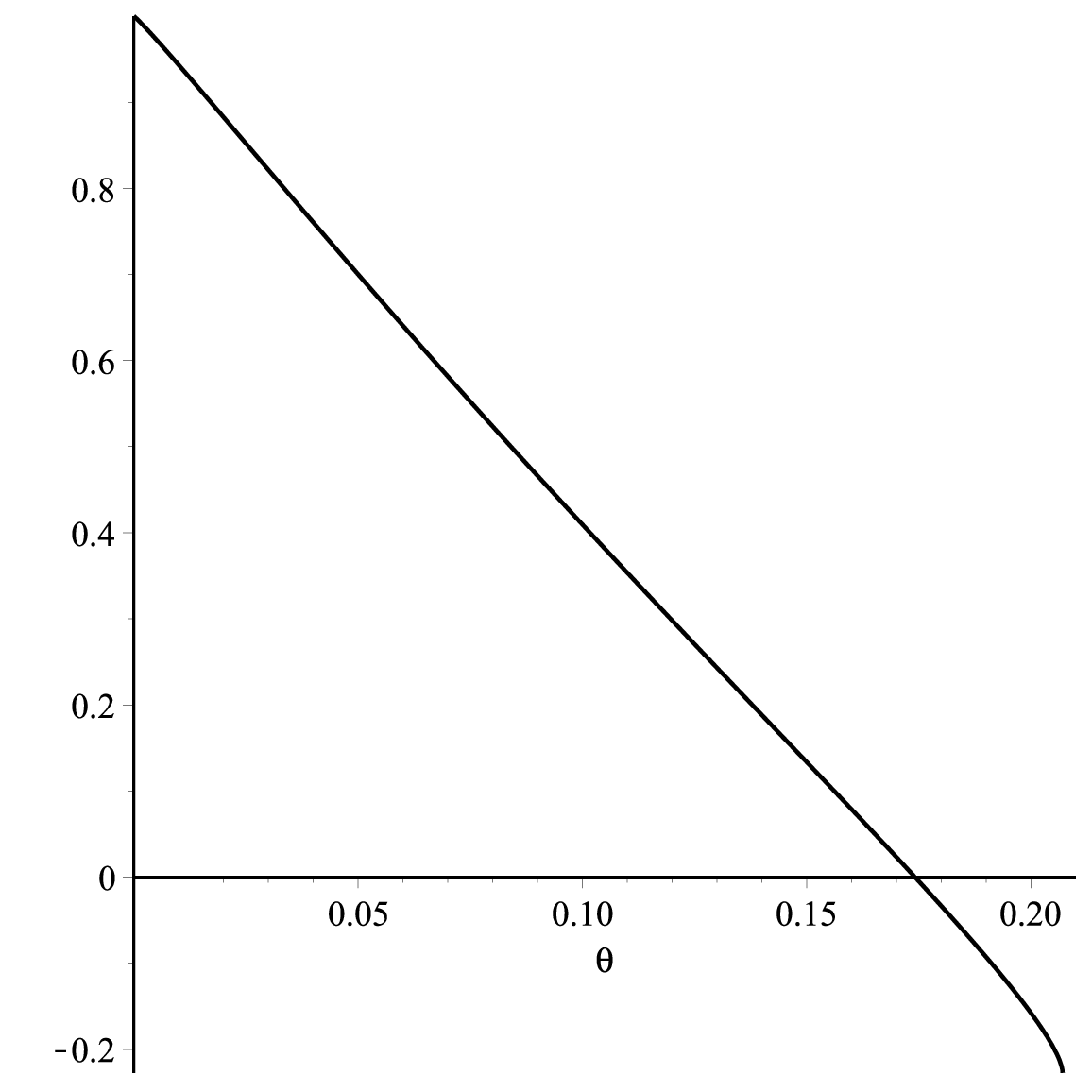} \ \ \ \ \
		\includegraphics[width=7cm]{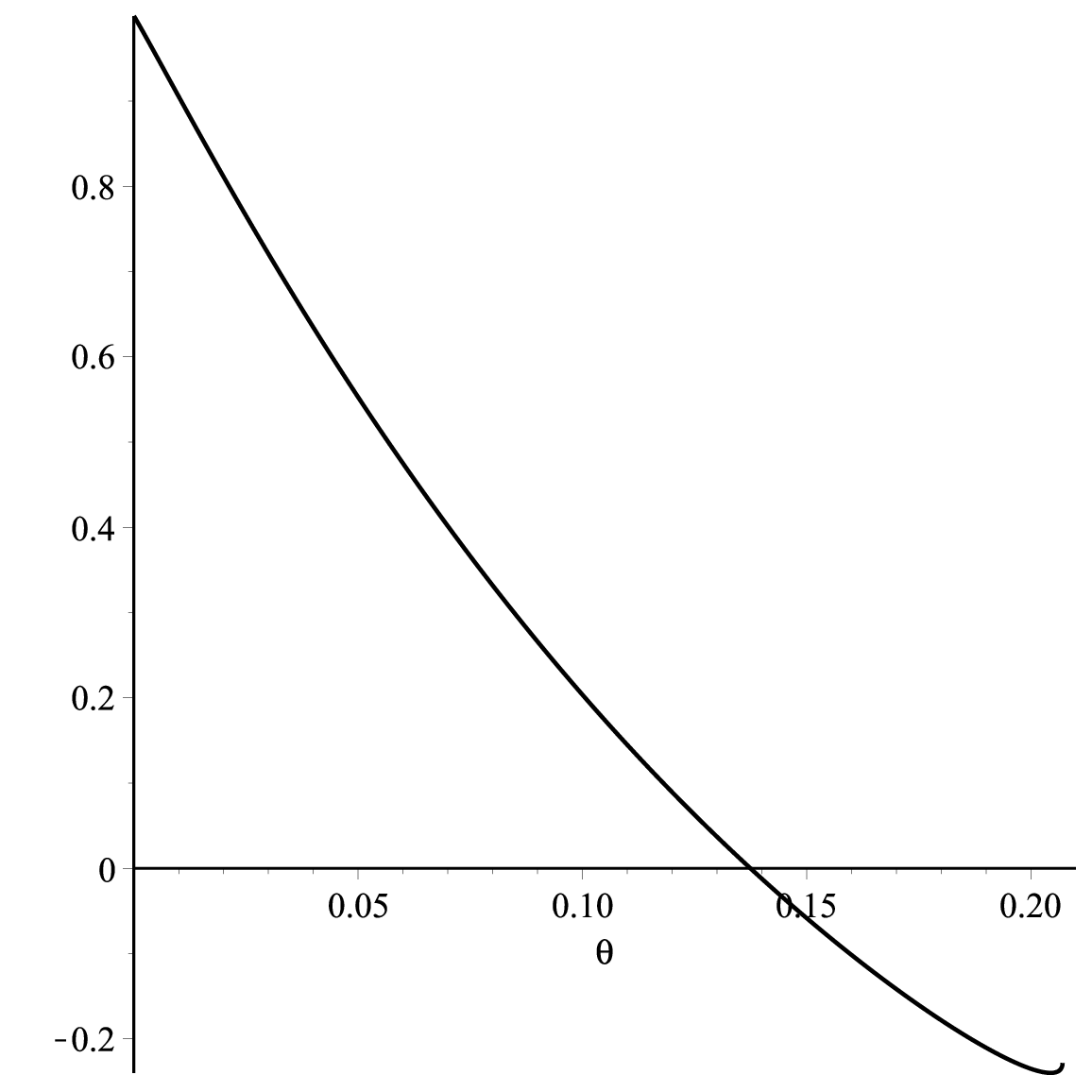}\ \ \
	\end{center}
	\caption{Graphs of the functions $\eta_2(\theta,0.1)$ (left) and  $\eta_3(\theta,0.1)$ (right),  $\theta\in(0, \theta_2)$.}\label{ne2}
\end{figure}

\subsubsection{Case: $p=10$} For $p=10$, in the case $y_1$, by computer analysis, one can check that   
$$\lambda_{\max,1}(\theta,10)=|\lambda_2(1,y_1,\theta,10)|.$$
\begin{re} Fig.~\ref{ne10} shows that the Kesten--Stigum Condition never holds for the solution $(1, y_1(\theta,10))$, with $\theta\in (0,1]$. But it holds for $\theta>1$.
\end{re}
\begin{figure}[h]
	\begin{center}
		\includegraphics[width=9cm]{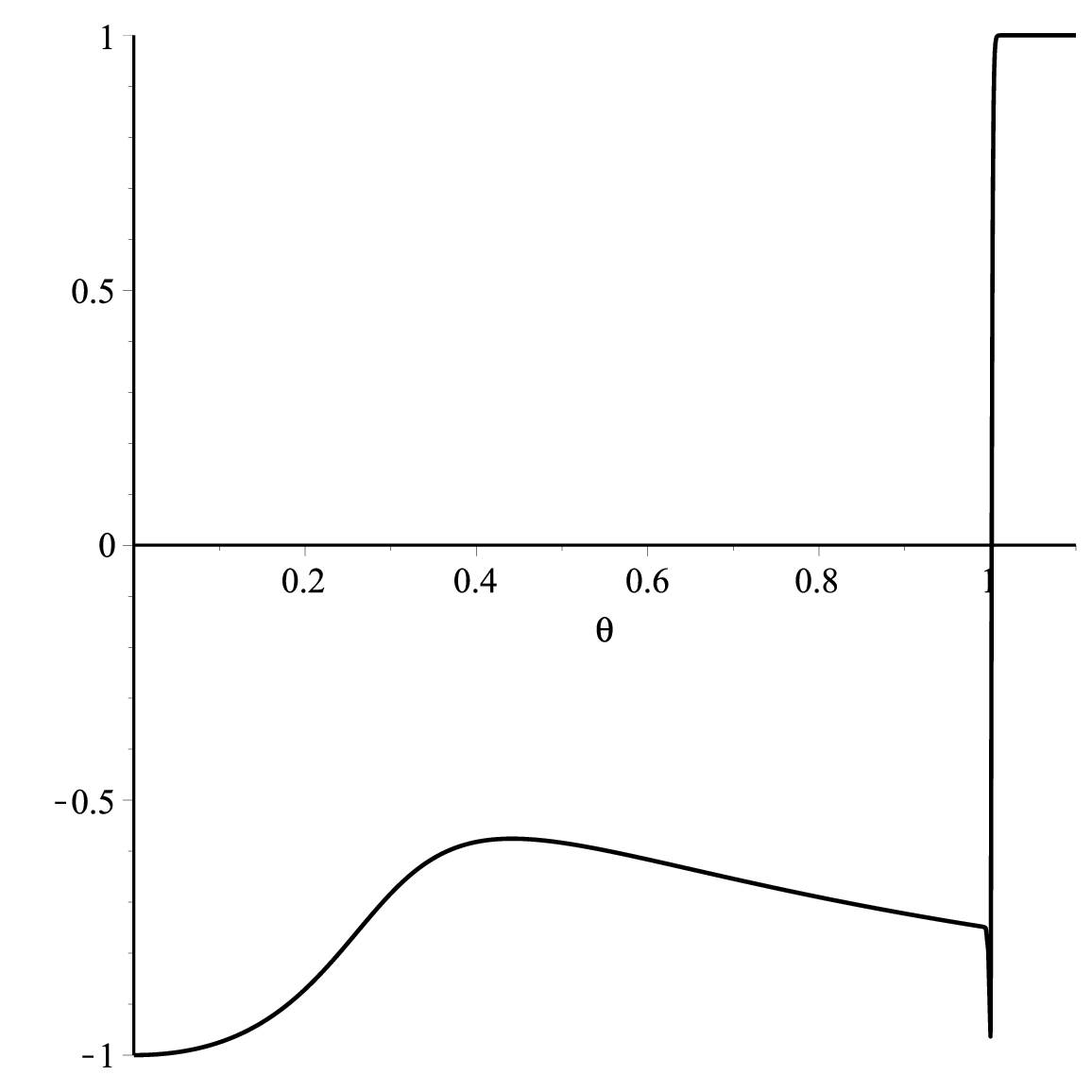} \ \ \ \ \
	\end{center}
	\caption{Graph of the function $\eta_1(\theta,10)$, $\theta\in(0,1)$. }\label{ne10}
\end{figure}

The fact that the threshold for $p=10$ is precisely given by $\theta=1$ is remarkable. We also note that in the antiferromagnetic case $\theta>1$, there is a unique TISGM which is however not extremal in case $p=10$. 
From Fig.~\ref{y1-7} we know that $y_2$ and $y_3$ exist only for $\theta<\theta'_2\approx 0.136$ and hence, using computer algebra, one can see that 
$$\lambda_{\max,i}(\theta,10)=|\lambda_2(1,y_i,\theta,10)|, \ \ \forall \theta\in (0, \theta'_2), i=2,3.$$
\begin{re}  Considering Fig.~\ref{ne20}, we see that the Kesten--Stigum Condition always holds for the solution $(1, y_i(\theta,10))$, with $\theta\in (0, \theta'_2)$, $i=2,3$.
\end{re}
\begin{figure}[h]
	\begin{center}
		\includegraphics[width=7cm]{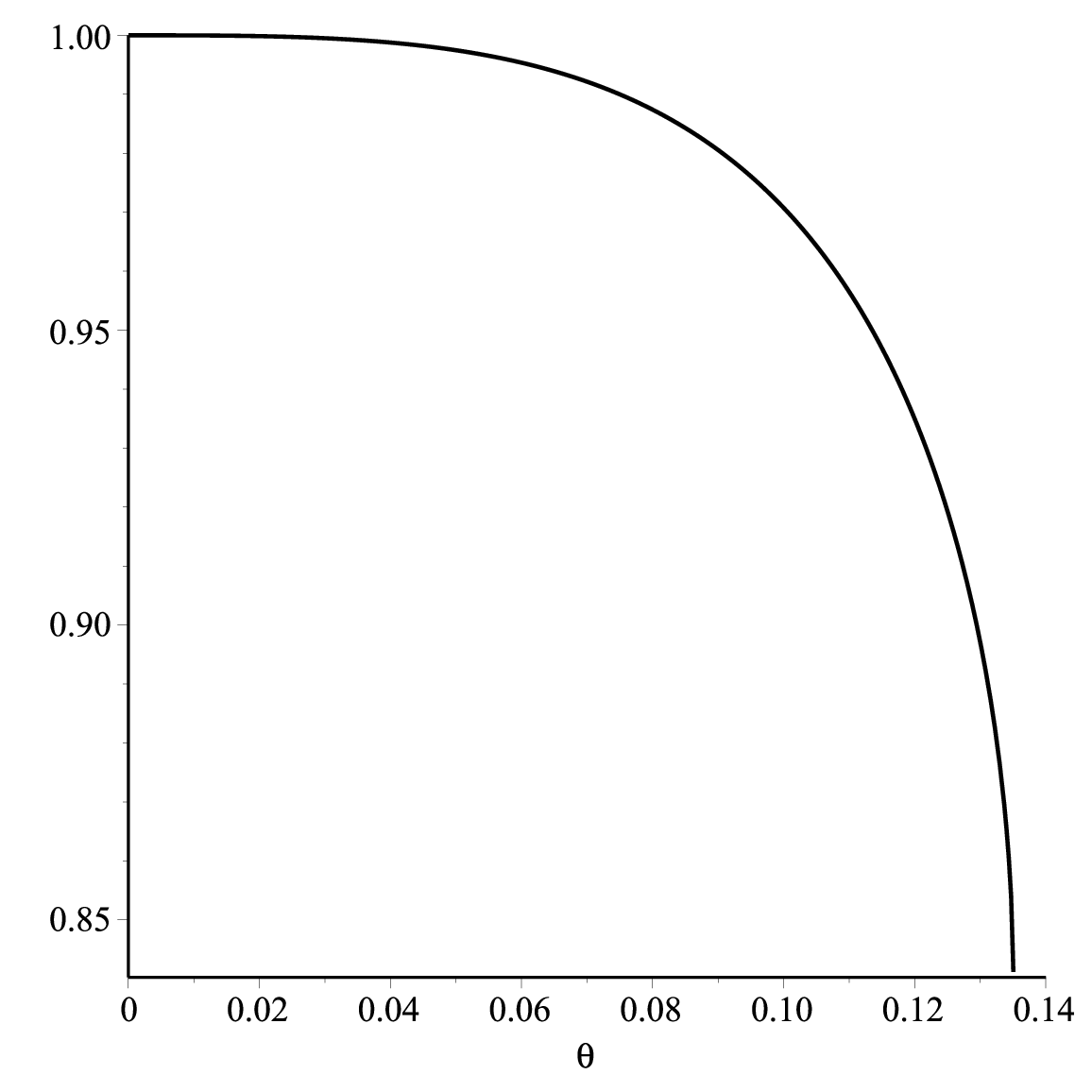} \ \ \ \ \
		\includegraphics[width=7cm]{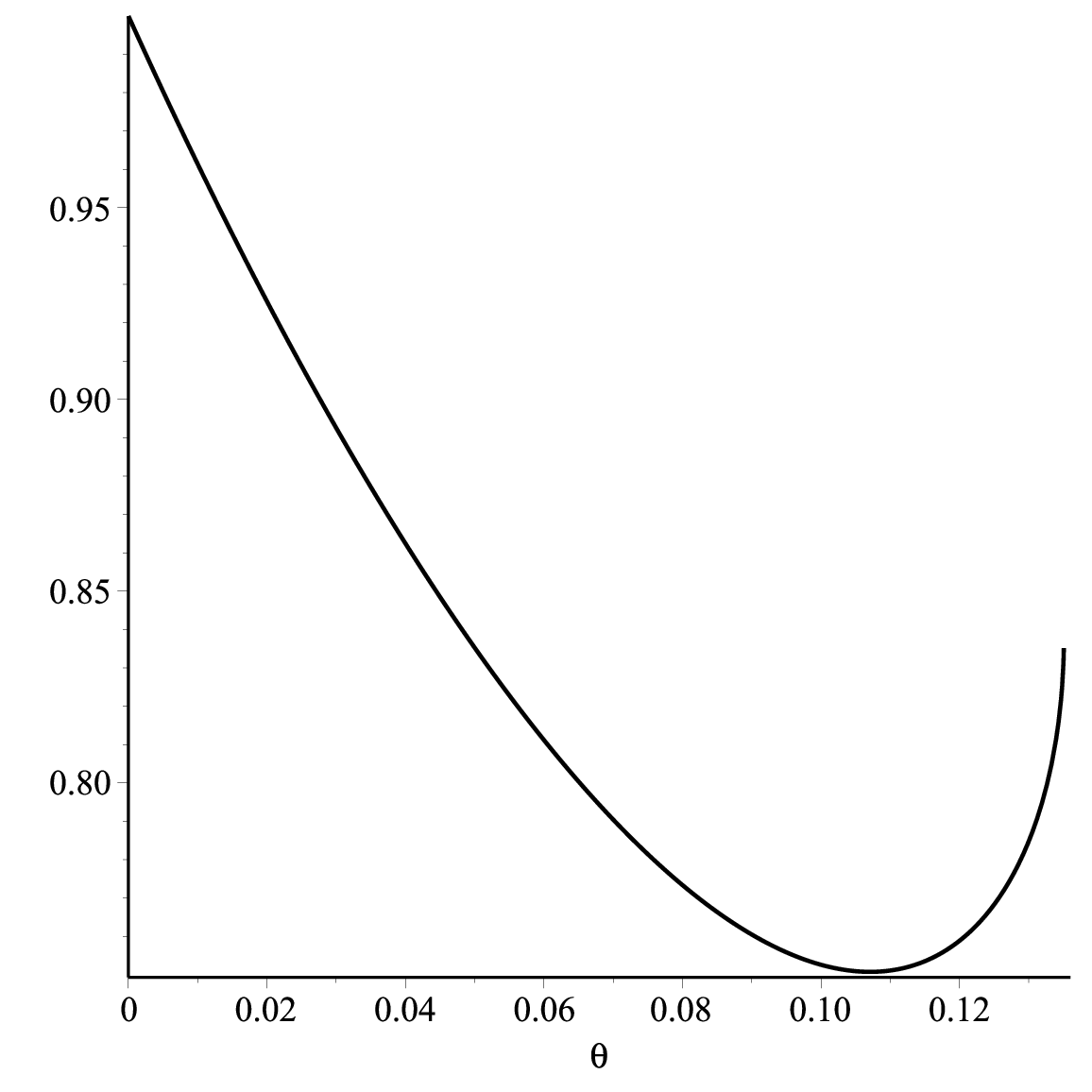}\ \ \
	\end{center}
	\caption{The graphs of the functions $\eta_2(\theta,10)$ (left) and  $\eta_3(\theta,10)$ (right),  $\theta\in(0, \theta'_2)$.}\label{ne20}
\end{figure}

 \subsection{Conditions for extremality}
 In this section we find sufficient conditions for extremality (or non-reconstructability in information-theoretic language \cite{MSW,Mos2,Mos,Sly11}) of TISGMs for the 3-state $p$-SOS model,
 depending on parameters $\theta$, $p$ and the boundary law.  We shall consider the TISGMs: $\mu_i$, $i=1,\dots, 7$.
 In order to check extremality, we will use a result of~\cite{MSW} to establish a bound for reconstruction impossibility that corresponds to the matrix~\eqref{m}) of a solution $(x_i, y_i)$, $i=1, \dots, 7$.

 Let us start by recalling some definitions from~\cite{MSW}. Considering finite complete subtrees $\mathcal T$ that are {\em initial} with respect to the Cayley tree $\Gamma^k$, i.e., share the same root. If $\mathcal T$ has depth $d$, i.e., the vertices of $\mathcal T$ are within distance $\leq d$ from the root, then it has $(k^{d+1}-1)/(k-1)$ vertices, and its boundary $\partial \mathcal T$ consists of the neighbors (in $\Gamma^k\setminus \mathcal T$) of its vertices, i.e., $|\partial \mathcal T|=k^{d+1}$.  We identify subgraphs of $\mathcal T$ with their vertex sets and write $E(A)$ for the edges within a subset $A$ and $\partial A$ for the boundary of $A$, i.e., the neighbors of $A$ in $(\mathcal T\cup \partial\mathcal T)\setminus A)$.

 Consider Gibbs measures $\{\mu^\tau_{{\mathcal T}}\}$, where the boundary condition $\tau$
 is fixed and $\mathcal T$ ranges over all initial finite complete subtrees of $\Gamma^k$.
  For a given subtree $\mathcal T$ of $\Gamma^k$ and a vertex $v\in\mathcal T$, we write $\mathcal T_v$ for
 the (maximal) subtree of $\mathcal T$ rooted at $v$. When $v$ is not the root of $\mathcal T$, let $\mu_{\mathcal T_v}^s$
 denote the (finite-volume) Gibbs measure in which the
 parent of $v$ has its spin fixed to $s$ and the configuration on the bottom boundary of ${\mathcal T}_v$
 (i.e., on $\partial {\mathcal T}_v\setminus \{\mbox{parent\ \ of}\ \ v\}$) is specified by $\tau$.

  For two measures $\mu_1$ and $\mu_2$ on $\Omega$, $\|\mu_1-\mu_2\|_v$ denotes the variational distance between the projections of $\mu_1$ and $\mu_2$ onto the spin at $v$, i.e.,
 $$\|\mu_1-\mu_2\|_v:={1\over 2}\sum_{i=0}^2|\mu_1(\sigma(v)=i)-\mu_2(\sigma(v)=i)|.$$
 Let $\eta^{v,s}$ be the
 configuration $\eta$ with the spin at $v$ set to $s$.
 Following~\cite{MSW}, we define
 $$\kappa:= \kappa(\mu)=\sup_{v\in\Gamma^k}\max_{v,s,s'}\|\mu^s_{{\mathcal T}_v}-\mu^{s'}_{{\mathcal T}_v}\|_v\quad\text{ and }\quad\gamma:=\gamma(\mu)=\sup_{A\subset \Gamma^k}\max\|\mu^{\eta^{w,s}}_A-\mu^{\eta^{w,s'}}_A\|_v,$$
 where the maximum on the right-hand side is taken over all boundary conditions $\eta$, all sites $w\in \partial A$, all neighbors $v\in A$ of $w$, and all spins $s, s'\in \{0,1,2\}$. We apply~\cite[Theorem 9.3]{MSW} which goes as follows.
 \begin{thm} For an arbitrary (ergodic\footnote{Ergodicity here means irreduciblity and aperiodicity. In this case, we have a unique stationary distribution $\pi=(\pi_1,\dots,\pi_q)$ with $\pi_i>0$ for all $i$.} and permissive\footnote{Permissive here means that, for arbitrary finite $A$ and boundary condition outside $A$ being $\eta$, the conditioned Gibbs measure on $A$, corresponding to the channel, is positive for at least one configuration.}) channel ${\mathbb P}=(P_{ij})_{i,j=1}^q$
 on a tree, the reconstruction of the corresponding tree-indexed Markov chain
 is impossible if $k\kappa\gamma<1$.
 \end{thm}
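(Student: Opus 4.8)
This is \cite[Theorem~9.3]{MSW}, so in the present manuscript it is simply invoked; the underlying argument has the following shape. Extremality of the tree-indexed Markov chain with transition matrix $\mathbb{P}$ is equivalent to \emph{non-reconstructability}, i.e.\ to the assertion that the finite-volume Gibbs measures $\mu^\tau_{\mathcal{T}}$ on an initial subtree $\mathcal{T}$ with boundary condition $\tau$ on $\partial\mathcal{T}$ satisfy $\|\mu^\tau_{\mathcal{T}}-\mu^{\tau'}_{\mathcal{T}}\|_{x^0}\to 0$ as $\mathcal{T}\uparrow\Gamma^k$, uniformly over all pairs $\tau,\tau'$ of boundary conditions. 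The plan is to prove this by a contraction estimate in the depth of $\mathcal{T}$, with the two local Lipschitz-type constants $\kappa$ and $\gamma$ serving as one-step influence bounds and the branching number $k$ absorbed into the recursion rather than multiplied on top of it.

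Concretely, I would set $R_d:=\sup\{\|\mu^\tau_{\mathcal{T}}-\mu^{\tau'}_{\mathcal{T}}\|_{x^0}:\mathcal{T}\text{ of depth }d,\ \tau,\tau'\text{ arbitrary}\}$ and aim for a one-step inequality $R_d\le k\kappa\gamma\, R_{d-1}$, valid up to harmless boundary effects at $d=0,1$. To obtain it, condition the measure at the root on the spins of its direct successors $S(x^0)$: a change of boundary data inside a successor subtree $\mathcal{T}_u$ perturbs the conditional law at $u$ by at most the depth-$(d-1)$ analogue of $R_{d-1}$ (with the dependence on the configuration strictly below $u$ absorbed by $\kappa$), while the resulting change in the spin at $u$ propagates to the marginal at $x^0$ with a loss of at most $\gamma$, since $u$ is a boundary neighbour of the region $\{x^0\}$. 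Summing over the $|S(x^0)|=k$ successors and iterating $d$ times gives $R_d\le C\,(k\kappa\gamma)^{d-1}\to 0$ precisely because $k\kappa\gamma<1$; hence the infinite-volume limit at $x^0$ — and, by translation invariance of the channel, at every vertex — is independent of the boundary condition, which is exactly extremality. Ergodicity and permissivity enter here as the hypotheses that keep the recursion well posed: ergodicity gives a strictly positive stationary distribution so that all conditional measures appearing are non-degenerate, and permissivity ensures that conditioning on arbitrary boundary data never produces the empty event, so $\kappa$ and $\gamma$ are genuine suprema of variational distances of well-defined conditional laws.

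The main obstacle is the bookkeeping in the one-step estimate, namely pinning the per-level constant down to exactly $k\kappa\gamma$. Changing the boundary inside the subtrees perturbs the spins at all $k$ successors of $x^0$ simultaneously, and their effects on the spin at $x^0$ are coupled through the spin at $x^0$ itself, so a crude union bound would only deliver a weaker per-level loss that need not close the recursion. Resolving this requires a finer telescoping — revealing the successor spins one at a time, or equivalently a coupling (disagreement-percolation) argument organised along the geodesics from the perturbed boundary vertices up to $x^0$ — so that one factor of influence going "down" ($\kappa$) and one going across a single edge ($\gamma$) combine, per level and per branch, into the product that appears in the hypothesis. Once this decomposition is in place, the remaining work — writing the relevant marginals as ratios of partition functions and bounding the induced variational distances by $\kappa$ and $\gamma$ — is routine.
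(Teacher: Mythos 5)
The paper does not prove this statement at all: it is imported verbatim as \cite[Theorem 9.3]{MSW} and used as a black box, so there is no in-paper argument to measure your proposal against. You correctly flag this, and your sketch is a faithful reconstruction of the shape of the Martinelli--Sinclair--Weitz proof: non-reconstruction is reduced to a per-level contraction of the root marginal's dependence on the boundary, with the two constants playing exactly the roles you assign them --- $\kappa=\tfrac12\max_{i,j}\sum_l|P_{il}-P_{jl}|$ bounding the influence of a parent spin on a child's conditional law along the channel, and $\gamma$ bounding the influence of a single conditioning spin on a neighbouring interior marginal, with the telescoping over the $k$ successors (rather than a union bound over simultaneous disagreements) producing the sharp factor $k\kappa\gamma$ per level. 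Your identification of where ergodicity and permissivity enter (well-posedness and non-degeneracy of all the conditional measures over which $\kappa$ and $\gamma$ are suprema) also matches the cited source. Two caveats: first, what you give is an outline, and you yourself locate the real work in the one-at-a-time revelation/coupling step that turns the crude bound into exactly $k\kappa\gamma$; that step is carried out in \cite{MSW} and is not trivial, so your text should be read as a correct roadmap rather than a proof. Second, be slightly careful with the claim that extremality is \emph{equivalent} to non-reconstructability uniformly over boundary conditions: for the measure $\mu$ in question, extremality is equivalent to triviality of the tail, which is implied by the uniform-over-$\tau$ decay you prove; the uniform statement is stronger than needed but is what the $k\kappa\gamma<1$ hypothesis actually delivers, so no harm is done. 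Since the paper's ``proof'' is a citation, your contribution here is strictly more informative than the source text, and nothing in it contradicts the argument being cited.
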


  Since each TISGM $\mu$ corresponds to a solution $(x,y)$ of the system of equations~\eqref{rs3.2a} and~\eqref{rs3.2b}, we can write $\gamma(\mu)=\gamma(x,y)$ and $\kappa(\mu)=\kappa(x,y)$.
  
 It is easy to see that the channel ${\mathbb P}$ corresponding to a TISGM of the $p$-SOS model is ergodic and permissive. Thus the criterion of {\it extremality} of a TISGM is $k\kappa\gamma<1$.
 Note that $\kappa$ has the particularly simple form (see~\cite{MSW})
 \begin{equation}\label{ka}
 \kappa={1\over 2}\max_{i,j}\sum_l|P_{il}-P_{jl}|
 \end{equation}
 and $\gamma$ is a constant that does not have a clean general formula, but it can be estimated.

 \subsubsection{Estimation of $\gamma$.}
To estimate the constant $\gamma(x_i,y_i)$
 depending on the boundary law labeled by $i$, for our model, we prove several lemmas.
  \begin{lemma}\label{pd} Recall the matrix ${\mathbb P}$ given by~\eqref{m} and denote by
 $\mu=\mu(\theta,p)$ the corresponding Gibbs measure. Then, for any subset $A\subset {\mathcal T}$, (where $\mathcal T$ is a complete subtree of $\Gamma^k$)
 any boundary configuration $\eta$, any pair of spins
 $(s_1, s_2)$, any site $w\in \partial A$, and any neighbor $v\in A$ of $w$, we have
 $$\|\mu^{\eta^{w,s_1}}_A-\mu^{\eta^{w,s_2}}_A\|_v\leq
\max\{p^0(0)-p^1(0),p^0(0)-p^2(0), |p^i(1)-p^j(1)|, p^2(2)-p^0(2), p^2(2)-p^1(2)\},$$
 where $p^{t}(s):=\mu^{\eta^{w,t}}_A(\sigma(v)=s)$.
\end{lemma}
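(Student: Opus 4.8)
The plan is to reduce the bound on $\|\mu^{\eta^{w,s_1}}_A-\mu^{\eta^{w,s_2}}_A\|_v$ to a statement about a single application of the channel $\mathbb{P}$ together with a monotonicity (stochastic-domination) argument. First I would observe that, for fixed $A$, $\eta$, and $w\in\partial A$ with neighbor $v\in A$, the measure $\mu^{\eta^{w,s}}_A$ restricted to $\sigma(v)$ depends on $s$ only through the conditional law of $\sigma(v)$ given $\sigma(w)=s$ and given the configuration on $\partial A\setminus\{w\}$ coming from $\eta$. By the Markov (splitting) property of the tree-indexed chain, conditioning on $\sigma(w)=s$ the spin at $v$ is obtained from $\sigma(w)$ through one step of the transition matrix $\mathbb{P}$, but then further influenced by the rest of $A$ and the remaining boundary. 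Writing $p^t(s)=\mu^{\eta^{w,t}}_A(\sigma(v)=s)$, the key structural fact is that $p^t$ is a convex combination (mixture over the "environment" inside $A$) of rows of a matrix of the form $\mathbb{P}$ post-composed with the same environment-dependent stochastic kernel; hence any difference $p^{t}-p^{t'}$ is dominated in variational distance by the corresponding difference of rows $t$ and $t'$ of $\mathbb{P}$ itself filtered through that common kernel.

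Next I would exploit the specific monotonicity of the rows of $\mathbb{P}$ in~\eqref{m}. The point of the asserted bound is that the extreme cases $s_1,s_2$ that maximize $\|\cdot\|_v$ are not arbitrary: because the pairwise interaction $\theta^{|i-j|^p}$ is a function of $|i-j|$ and $\theta<1$ gives a unimodal structure (and $\theta>1$ the reverse), the rows of $\mathbb{P}$ are ordered so that row $0$ and row $2$ are the "extremes" and row $1$ lies between them in an appropriate stochastic sense. Concretely, I would show that for the quantity $p^t(0)$ the map $t\mapsto p^t(0)$ is monotone (decreasing), so that its oscillation over $t\in\{0,1,2\}$ equals $p^0(0)-p^2(0)$, and likewise $t\mapsto p^t(2)$ is monotone increasing with oscillation $p^2(2)-p^0(2)$, whereas for the middle state $s=1$ no such monotonicity holds and one must keep the full $\max_{i,j}|p^i(1)-p^j(1)|$. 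Taking $\tfrac12\sum_s$ of these oscillations and using that for three probability vectors the variational distance between two of them is at most the sum over coordinates of the per-coordinate oscillation, one collapses the bound into the stated maximum of five quantities. The role of the factor $\tfrac12$ disappears because only one coordinate attains each extreme value; this is the standard observation behind~\eqref{ka}.

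The main obstacle I expect is establishing the monotonicity of $t\mapsto p^t(s)$ for the boundary states $s=0$ and $s=2$ rigorously in the presence of an arbitrary environment inside $A$. This is genuinely a stochastic-domination claim: one needs that feeding a "larger" input distribution at $w$ (in the sense of the ordering on $\{0,1,2\}$, or rather in the sense that makes $\sigma(v)$ more likely to be $2$ and less likely to be $0$) through $\mathbb{P}$ and then through the common downstream kernel preserves the ordering of $\mathbb{P}(\sigma(v)=0\mid\cdot)$ and $\mathbb{P}(\sigma(v)=2\mid\cdot)$. I would handle this by checking that each row-pair comparison reduces to sign conditions on the entries of $\mathbb{P}$ — e.g. comparing $P_{0,0}$ with $P_{1,0}$ and $P_{2,0}$ directly from~\eqref{m}, which amounts to elementary inequalities among $x^2$, $\theta y^2$, $\theta^{2^p}$, $\theta x^2$, etc. — and then noting that the downstream kernel acts identically on all three inputs, so it cannot reverse an inequality that holds coordinatewise after the first step. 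The bookkeeping of which of the three rows is extremal in each coordinate, and verifying it is consistent with the five-term maximum in the statement, is the part that needs care; everything else is the generic mixture/variational-distance estimate.
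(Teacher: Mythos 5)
Your high-level plan---identify, for each value $s$ of the spin at $v$, which conditioning spins at $w$ give the extremal values of $p^t(s)$, and then reduce the variational distance to per-coordinate oscillations---is the same as the paper's (its Lemma~\ref{lem} performs exactly this identification). But two of your intermediate steps do not hold as stated. First, the boundary spin at $w$ does not act on $\sigma(v)$ as ``a row of $\mathbb P$ post-composed with a common downstream kernel''; it acts by Bayes reweighting. Writing $p_s=\mu^{\eta^{w,\mathrm{free}}}_A(\sigma(v)=s)$ for the marginal at $v$ with the condition at $w$ removed, one has $p^t(s)\propto Q(t,s)\,p_s$, where $Q(t,s)$ are the unnormalized entries of~\eqref{m}; this is the content of~\eqref{ppp} and is the essential first step of the paper's proof. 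Your plan to verify sign conditions on the raw entries of $\mathbb P$ and then argue that the downstream kernel ``cannot reverse an inequality that holds coordinatewise'' fails twice over: a common stochastic kernel applied to two probability vectors does not preserve coordinatewise inequalities (only total variation contracts), and the comparison of $p^i(s)$ with $p^j(s)$ genuinely involves the environment $(p_0,p_1,p_2)$ through the normalizations, so it cannot be read off from the matrix entries alone. The correct computation is the one behind Lemma~\ref{lem}: for instance $p^0(0)-p^1(0)$ has numerator $x^2p_0\bigl((1-\theta^2)y^2p_1+(1-\theta^{2^p})\theta p_2\bigr)$, nonnegative for $\theta\le 1$ uniformly in the environment.

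Second, your claimed monotonicity of $t\mapsto p^t(0)$ (and of $t\mapsto p^t(2)$) is both stronger than needed and false in general: $p^1(0)\ge p^2(0)$ holds iff $(\theta^2-\theta^{2^p})y^2p_1+\theta(1-\theta^{2^p})p_2\ge 0$, whose sign depends on whether $p\gtrless 1$ and on $(p_1,p_2)$. What is true for $\theta<1$, and all the statement needs, is that $p^0(0)$ is the largest of $\{p^t(0)\}_t$ and $p^2(2)$ the largest of $\{p^t(2)\}_t$; this is exactly why the right-hand side of the lemma retains both $p^0(0)-p^1(0)$ and $p^0(0)-p^2(0)$ instead of collapsing to $p^0(0)-p^2(0)$ as your ``oscillation'' would. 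Finally, the passage from the variational distance to the maximum of coordinatewise oscillations is not justified by your remark that ``the factor $\tfrac12$ disappears'': the inequality $\tfrac12\sum_k\mathrm{osc}_k\le\max_k\mathrm{osc}_k$ is false in general. The correct elementary fact is that for two probability vectors on three states the coordinate differences sum to zero, so one of the two sign classes is a singleton $\{k_0\}$ and $\tfrac12\sum_k|p^{s_1}(k)-p^{s_2}(k)|=|p^{s_1}(k_0)-p^{s_2}(k_0)|\le\max_k|p^{s_1}(k)-p^{s_2}(k)|$, after which Lemma~\ref{lem} bounds each coordinatewise maximum by the listed differences.
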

\begin{proof} Denote $p_s=\mu^{\eta^{w,\text{free}}}_A(\sigma(v)=s)$, $s=0,1,2$.   By definition of the matrix ${\mathbb P}$, we have
$$
 p^0(0)={x^2p_0\over x^2p_0+\theta y^2p_1+\theta^{2^p} p_2}, \ \
  p^0(1)={\theta y^2p_1\over x^2p_0+\theta y^2p_1+\theta^{2^p} p_2},\ \
   p^0(2)={\theta^{2^p}p_2\over x^2p_0+\theta y^2p_1+\theta^{2^p} p_2};$$
    \begin{equation}\label{ppp}
  p^1(0)={\theta x^2p_0\over \theta x^2p_0+ y^2p_1+\theta p_2}, \ \
     p^1(1)={y^2p_1\over \theta x^2p_0+ y^2p_1+\theta p_2},\ \
      p^1(2)={\theta p_2\over \theta x^2p_0+ y^2p_1+\theta p_2};
    \end{equation}
    $$
       p^2(0)={\theta^{2^p}x^2p_0\over \theta^{2^p} x^2p_0+\theta y^2p_1+ p_2}, \ \
        p^2(1)={\theta y^2p_1\over \theta^{2^p} x^2p_0+\theta y^2p_1+ p_2},\ \
         p^2(2)={p_2\over \theta^{2^p} x^2p_0+\theta y^2p_1+p_2},
        $$
and hence, the  proposition follows from the following Lemma~\ref{lem} and Lemma~\ref{lK}.\end{proof}

 \begin{lemma}\label{lem} If $\theta<1$ then
 \begin{itemize}
 \item[a)]  $p^{0}(0)\geq \max\{p^{1}(0), p^2(0)\}$;

 \item[b)] If $p\leq 1$, then $p^{1}(1)\geq \max\{p^{0}(1),  p^2(1)\}$. If $p>1$, then, the values of $p^0(1)$, $p^{1}(1)$ and $p^2(1)$ may have any order depending on $(p_0,p_2)$.
 \item[c)]  $p^{2}(2)\geq \max\{p^{0}(2), p^1(2)\}$.
 \end{itemize}
 \end{lemma}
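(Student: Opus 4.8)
The plan is to analyze each of the three inequalities separately by direct manipulation of the explicit expressions for $p^t(s)$ given in~\eqref{ppp}, reducing each to an elementary comparison. For part a), I would compare $p^0(0)$ with $p^1(0)$ and with $p^2(0)$. Writing $p^0(0)=x^2p_0/(x^2p_0+\theta y^2 p_1+\theta^{2^p}p_2)$, a fraction of the form $u/(u+v)$ is increasing in $u$ and decreasing in $v$; so to show $p^0(0)\ge p^1(0)$ it suffices to compare the ``numerator over rest-of-denominator'' ratios, i.e. to check $x^2p_0/(\theta y^2p_1+\theta^{2^p}p_2)\ge \theta x^2p_0/(y^2p_1+\theta p_2)$, which after cancelling $x^2p_0$ becomes $y^2p_1+\theta p_2\ge \theta(\theta y^2p_1+\theta^{2^p}p_2)$, i.e. $(1-\theta^2)y^2p_1+(\theta-\theta^{2^p+1})p_2\ge 0$. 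Since $0<\theta<1$ and $p>0$ force $\theta^2<1$ and $\theta^{2^p+1}<\theta$, both coefficients are nonnegative, so the inequality holds. The comparison $p^0(0)\ge p^2(0)$ is handled the same way, leading to $(\theta^{2^p}-\theta^{2\cdot 2^p})x^2p_0$-type terms that are again nonnegative for $\theta<1$; actually after cancelling it reduces to $\theta y^2p_1+p_2\ge \theta^{2^p}(\theta y^2p_1+\theta^{2^p}p_2)$, true since $\theta^{2^p}<1$. Part c) is the mirror image of part a): by the substitution $s\mapsto 2-s$ the matrix~\eqref{m} is symmetric in the appropriate sense, so $p^2(2)\ge\max\{p^0(2),p^1(2)\}$ follows by the identical argument with the roles of $0$ and $2$ interchanged, again using only $\theta<1$ and $\theta^{2^p}<1$.

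For part b) with $p\le 1$, I want $p^1(1)\ge p^0(1)$ and $p^1(1)\ge p^2(1)$. Using the same ``$u/(u+v)$ monotonicity'' reduction, $p^1(1)\ge p^0(1)$ becomes (after cancelling $y^2p_1$) the inequality $\theta x^2p_0+\theta p_2\ge\theta^{-1}\cdot(\text{something})$; more carefully, comparing $y^2p_1/(\theta x^2p_0+\theta p_2)$ with $\theta y^2 p_1/(x^2p_0+\theta^{2^p}p_2)$ reduces to $x^2p_0+\theta^{2^p}p_2\ge \theta(\theta x^2p_0+\theta p_2)=\theta^2 x^2p_0+\theta^2 p_2$, i.e. $(1-\theta^2)x^2p_0+(\theta^{2^p}-\theta^2)p_2\ge 0$. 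The first term is nonnegative; the second requires $\theta^{2^p}\ge\theta^2$, which for $0<\theta<1$ is equivalent to $2^p\le 2$, i.e. $p\le 1$. This is exactly where the hypothesis $p\le 1$ enters, and when $p>1$ the coefficient $\theta^{2^p}-\theta^2$ is negative, so for small $p_0$ relative to $p_2$ the inequality reverses — this is the content of the second sentence of b). Symmetrically, $p^1(1)\ge p^2(1)$ reduces to $(1-\theta^2)p_2 + (\theta^{2^p}-\theta^2)x^2p_0\ge 0$ (up to relabeling), again valid precisely when $p\le 1$, with the order of $p^0(1),p^1(1),p^2(1)$ becoming $(p_0,p_2)$-dependent when $p>1$.

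The only genuinely delicate point is the claim for $p>1$ that the three quantities $p^0(1),p^1(1),p^2(1)$ can be in any order. For this I would exhibit, for fixed $\theta<1$ and $p>1$, explicit choices of $(p_0,p_1,p_2)$ realizing different orderings: taking $p_2$ large compared to $p_0$ makes $\theta^{2^p}x^2p_0$ negligible and forces $p^0(1)>p^1(1)$ (since then the $0$-channel denominator $x^2p_0+\theta y^2p_1+\theta^{2^p}p_2$ weights $p_2$ by the tiny factor $\theta^{2^p}$ while the $1$-channel weights it by $\theta$, and $\theta^{2^p}<\theta$), whereas taking $p_0$ large and $p_2$ small recovers $p^1(1)>p^0(1)$; intermediate choices interpolate. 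I expect the bookkeeping of which coefficient is which across the three reductions to be the main place where sign errors can creep in, but no step is conceptually hard — everything rests on the two elementary facts $\theta^{2^p}<\theta<1$ (always) and $\theta^{2^p}\gtrless\theta^2\iff p\lessgtr 1$ (for $0<\theta<1$), together with the monotonicity of $u\mapsto u/(u+v)$.
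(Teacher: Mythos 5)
Your proposal is correct and follows essentially the same route as the paper: after cross-multiplying, your key quantities $(1-\theta^2)y^2p_1+(\theta-\theta^{2^p+1})p_2$, $(1-\theta^2)x^2p_0+(\theta^{2^p}-\theta^2)p_2$ and $(\theta^{2^p}-\theta^2)x^2p_0+(1-\theta^2)p_2$ are exactly the numerators of the differences $p^0(0)-p^1(0)$, $p^1(1)-p^0(1)$ and $p^1(1)-p^2(1)$ that the paper computes directly from~\eqref{ppp}, with the sign dichotomy $\theta^{2^p}\gtrless\theta^2\iff p\lessgtr 1$ entering at the same point. Your explicit construction for the $p>1$ ``any order'' claim (taking $p_0$ or $p_2$ dominant) is in fact slightly more concrete than the paper's, which only notes the threshold $p_0>(1-\theta^2)p_2/((\theta^2-\theta^{2^p})x^2)$ and asserts realizability.
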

 \begin{proof}  We shall prove some of the inequalities (all others are proved similarly):
 
 a) By the Formula~\eqref{ppp}, we get
\begin{align*}
p^0(0)-p^1(0)&=x^2p_0{(1-\theta^2)y^2p_1+(1-\theta^{2^p})\theta p_2\over (x^2p_0+\theta y^2p_1+\theta^{2^p} p_2)(\theta x^2p_0+ y^2p_1+\theta p_2)},\\
p^{0}(0)-p^{2}(0)&=x^2p_0{\theta(1-\theta^{2^p})y^2p_1+(1-\theta^{2^{p+1}} p_2)\over \left(x^2p_0+\theta y^2p_1+\theta^{2^p} p_2\right)\left(\theta^{2^p} x^2p_0+ \theta y^2p_1+p_2\right)},
\end{align*}
and both are positive iff $\theta<1$.

b) Consider 
\begin{align*}
p^{1}(1)-p^{0}(1)&=y^2p_1{(1-\theta^2)p_0x^2 +(\theta^{2^p}-\theta^2)p_2\over \left(\theta x^2p_0+ y^2p_1+\theta p_2\right)\left(x^2p_0+\theta y^2p_1+\theta^{2^p}p_2\right)},\\
p^{1}(1)-p^{2}(1)&=y^2p_1{(\theta^{2^p}-\theta^2)p_0x^2 +(1-\theta^2)p_2\over \left(\theta x^2p_0+ y^2p_1+\theta p_2\right)\left(\theta^{2^p}x^2p_0+\theta y^2p_1+p_2\right)},
\end{align*}
   which are non-negative if $\theta<1$ and $p\leq 1$. Moreover, if $p>1$, then, for $\theta<1$, we have $\theta^{2^p}-\theta^2<0$, then assuming $p^{1}(1)-p^{2}(1)<0$, that is $(\theta^{2^p}-\theta^2)p_0x^2 +(1-\theta^2)p_2<0$, we get the condition $p_0>(1-\theta^2)p_2/((\theta^2-\theta^{2^p})x^2)$. It is easy to see that there are values $p>1$ and $\theta<1$ such that all possible inequalities may hold. 
   
   c) Similar to the case a).
   \end{proof}

Next, if $\theta<1$ then we have that
 $$\max_{i,j,k}\left\{|p^i(k)-p^j(k)|\right\}=\max_{i,j}\{p^0(0)-p^1(0),p^0(0)-p^2(0), |p^i(1)-p^j(1)|, p^2(2)-p^0(2), p^2(2)-p^1(2)\}.$$
 Indeed, the case $k=0$ and $k=2$ follow from Lemma~\ref{lem}. For $k=1$ some differences can be reduced to the case $k=0$ or $k=2$, by the following equality
    $$ p^i(1)-p^j(1)= p^j(0)-p^i(0)+p^j(2)-p^i(2).$$
  
Let us give an upper bound of $|p^i(k)-p^j(k)|$, for the maximal ones mentioned above. For $(p_0,p_1,p_2)$ (i.e., a probability distribution on $\{0,1,2\}$) denote $t=p_0$, $u=p_2$, $0\leq t+u\leq 1$ and define the following functions
\begin{align*}
f(t,u,\theta,p)&=p^0(0)-p^2(0)\\
&={x^2t\over (x^2-\theta y^2)t+\theta(\theta^{2^p-1}-y^2)u+\theta y^2}-{x^2\theta^{2^p}t\over \theta(\theta^{2^p-1} x^2-y^2)t+(1-\theta y^2)u+\theta y^2}\\
\varphi(t,u,\theta,p)&= p^0(0)-p^1(0)\\
&={x^2t\over (x^2-\theta y^2)t+(\theta^{2^p}-\theta y^2)u+\theta y^2}-
 {\theta x^2t\over (\theta x^2-y^2)t+(\theta-y^2)u+y^2},\\
 \psi(t,u,\theta,p)&= p^1(1)-p^0(1)\\
 &={y^2u\over \theta(x^2-1)t+(y^2-\theta)u+\theta}-
 {\theta y^2u\over (x^2-\theta^{2^p})t+(\theta y^2-\theta^{2^p})u+\theta^{2^p}},\\
g(t,u,\theta,p)&=p^2(2)-p^0(2)\\
&={u\over \theta(\theta^{2^p-1} x^2-y^2)t+(1-\theta y^2)u+\theta y^2}-{\theta^{2^p} u\over (x^2-\theta y^2)t+\theta(\theta^{2^p-1}- y^2)u+\theta y^2}.
\end{align*}

 \begin{lemma}\label{lK} If $\theta< 1$ then 
 $$\max\{|f(t,u,\theta,p)|, |\varphi(t,u,\theta,p)|,  |\psi(t,u,\theta,p)|, |g(t,u,\theta,p)|\}\leq {1-\theta^{2^p}\over 1+\theta^{2^p}}.$$
 \end{lemma}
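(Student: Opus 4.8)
The plan is to bound each of the four functions $f,\varphi,\psi,g$ separately, exploiting the fact that each is a difference of two quantities of the form $\alpha z/(\beta z + \gamma)$ in the variables $t,u$ (with the other variable acting as a parameter), so that the extremum over the simplex $\{t,u\ge 0,\ t+u\le 1\}$ is attained on its boundary. Concretely, I would first observe that each of $f,\varphi,\psi,g$ is a difference $p^i(k)-p^j(k)$ where both $p^i(k)$ and $p^j(k)$ are probabilities of the event $\{\sigma(v)=k\}$ obtained from the \emph{same} free distribution $(p_0,p_1,p_2)$ pushed through two \emph{different} rows of $\mathbb P$. Hence each term lies in $[0,1]$ and the difference lies in $[-1,1]$ automatically; the content of the lemma is the sharper bound $(1-\theta^{2^p})/(1+\theta^{2^p})$. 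The key structural remark is that the map $(p_0,p_1,p_2)\mapsto (p^i(0),p^i(1),p^i(2))$ is a Möbius (projective-linear) map of the probability simplex, so that $p^i(k)-p^j(k)$, as a function of $(p_0,p_1,p_2)$, attains its maximum and minimum at vertices or, after fixing one coordinate, along edges of the simplex; this reduces the problem to a one-variable optimisation.

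The concrete steps I would carry out are as follows. (i) For $f(t,u,\theta,p)=p^0(0)-p^2(0)$, rewrite it as a single rational function of $t$ with $u$ as a parameter and show $\partial f/\partial t$ has constant sign on $[0,1-u]$, so the extremum in $t$ is at $t=0$ (giving $f=0$) or at $t=1-u$; then optimise the resulting one-variable function in $u$ and check that the supremum equals $(1-\theta^{2^p})/(1+\theta^{2^p})$, attained in the limit $u\to 0$, $t\to 1$, i.e. for the Dirac mass at spin $0$ — which makes probabilistic sense, since then $p^0(0)=x^2/(x^2)=1$ while $p^2(0)=\theta^{2^p}x^2/(\theta^{2^p}x^2+?)$... more precisely at $(t,u)=(1,0)$ one gets $p^0(0)=1$ and $p^2(0)=\theta^{2^p}$, whose difference is $1-\theta^{2^p}\le(1-\theta^{2^p})/(1+\theta^{2^p})$ is false, so in fact the true maximiser is an interior edge point and one must genuinely solve the quadratic coming from $f'=0$; I would carry that through and verify the bound. (ii) Repeat the same one-variable reduction for $\varphi$, $\psi$ and $g$. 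Here I would use symmetry: the substitution swapping spins $0\leftrightarrow 2$ (equivalently $x\leftrightarrow y$ together with the involution on $\theta$-powers coming from $(m-i)^p$) maps $f$ to $g$ and $\varphi$ to $\psi$, so only two genuinely new optimisations are needed. (iii) Finally, assemble: by Lemma~\ref{lem} and the identity $p^i(1)-p^j(1)=p^j(0)-p^i(0)+p^j(2)-p^i(2)$ displayed above, every $|p^i(k)-p^j(k)|$ is dominated by $\max\{|f|,|\varphi|,|\psi|,|g|\}$, so the stated bound on $\gamma$-type quantities follows.

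I expect the main obstacle to be step (i)–(ii): the honest verification that the one-variable maximum of each rational function over the relevant interval is \emph{exactly} $(1-\theta^{2^p})/(1+\theta^{2^p})$ and not something larger. The denominators involve $x^2,y^2,\theta,\theta^{2^p}$ simultaneously, and $x,y$ themselves depend on $\theta,p$ through the fixed-point equations~\eqref{rs3.2a}--\eqref{rs3.2b}; so after taking the derivative one gets a quadratic in $t$ (or $u$) whose roots and whose value of the objective at those roots are unpleasant. The clean way around this is to avoid using the fixed-point relations at all: treat $x^2,y^2$ as \emph{free} positive parameters and prove the bound $\max\{|f|,|\varphi|,|\psi|,|g|\}\le (1-\theta^{2^p})/(1+\theta^{2^p})$ uniformly in $x,y>0$. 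For instance, for $g$ one writes $g = u\big(\tfrac{1}{D_2}-\tfrac{\theta^{2^p}}{D_0}\big)$ with $D_0,D_2$ the two denominators; since $D_0,D_2>0$ and $D_0\le\theta^{2^p}D_2$ would force $g\le 0$ while in general $g\le u/D_2 \le 1/D_2\cdot(\text{something})$, one bounds $g$ by comparing $D_0$ and $D_2$ coefficient-wise in $t,u$ and using $\theta<1$ to sign the differences $x^2-\theta y^2$, $1-\theta y^2$ etc. after further case analysis; the worst case collapses to comparing $1/(1+\theta^{2^p}s)$-type expressions, from which the bound $(1-\theta^{2^p})/(1+\theta^{2^p})$ drops out. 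This ``free-parameter, compare-denominators, use $\theta<1$'' route is what I would write up in detail, doing $f$ and $\varphi$ explicitly and invoking the $0\leftrightarrow 2$ symmetry for $g$ and $\psi$.
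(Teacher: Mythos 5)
Your overall shape is the right one and coincides with the paper's: show that the maximum over the simplex $\{t,u\ge 0,\ t+u\le 1\}$ is attained on its boundary, then perform a one-variable optimisation of a difference of two M\"obius maps on each edge, uniformly in $x,y>0$. But as written the plan has several concrete defects. First, the boundary reduction: your claim that $\partial f/\partial t$ has constant sign on $[0,1-u]$ is false --- on the edge $u=0$ one has $f(0,0)=f(1,0)=0$ with an interior maximum at $t=y^2/(\theta^{2^{p-1}-1}x^2+y^2)$, so $f_t'$ changes sign --- and your vertex evaluation is wrong: at $(t,u)=(1,0)$ the free law is the Dirac mass at spin $0$, so $p^2(0)=\theta^{2^p}x^2/(\theta^{2^p}x^2)=1$, not $\theta^{2^p}$, and $f=0$ there. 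The paper instead solves the stationarity system~\eqref{hu},~\eqref{ht} and shows that an interior critical point forces $t=0$ or $\theta=1$; for $\theta<1$ the maximum therefore lies on one of the three edges. You need an argument of this kind and you do not supply one; your fallback of ``comparing denominators coefficient-wise'' is a heuristic, not a proof.

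Second, the spin symmetry $0\leftrightarrow 2$ does send $f$ to $g$ (it amounts to $t\leftrightarrow u$ together with $x^2\mapsto 1/x^2$ after renormalisation), but it does \emph{not} send $\varphi=p^0(0)-p^1(0)$ to $\psi=p^1(1)-p^0(1)$: it sends $\varphi$ to $p^2(2)-p^1(2)$, a different difference. Indeed $\varphi$ and $\psi$ have different edge maxima (the paper finds $(1-\theta^{2^{p-1}})/(1+\theta^{2^{p-1}})$ for $\varphi$ but $|1-\theta^{2^{p-1}-1}|/(1+\theta^{2^{p-1}-1})$ for $\psi$), so three separate optimisations are needed, not two. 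Third --- and this is the step you omit entirely --- the various edge maxima are \emph{not} all equal to $(1-\theta^{2^p})/(1+\theta^{2^p})$; they have the form $(1-\theta^{s})/(1+\theta^{s})$ with $s\in\{1,\ 2^{p-1},\ |2^{p-1}-1|,\ 2^{p}\}$, and the lemma only follows because $s\mapsto(1-\theta^{s})/(1+\theta^{s})$ is increasing for $\theta<1$ while $2^p$ dominates all the other exponents (this is the role of the function $\Theta$ in the paper's proof). Finally, the one-variable maximisations themselves are never carried out: each reduces to maximising an expression of the form $\mathrm{const}\cdot a/(\alpha a^{2}+\beta a+\gamma)$ over $a>0$, which is done by minimising the denominator via AM--GM; until you do this for $f$, $\varphi$ and $\psi$ and then invoke the monotonicity comparison, the stated bound is not established.
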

 \begin{proof} We present our calculation only for the function $f$, the other functions are checked similarly. To find the maximal value of the function $f$ we have to solve the following system
 \begin{equation}\label{hu}
 \begin{split}
  f_u'(t,u,\theta,p)&={\theta x^2t(y^2-\theta^{2^p-1})\over ((x^2-\theta y^2)t+\theta(\theta^{2^p-1}-y^2)u+\theta y^2)^2}\\
&\qquad +{\theta^{2^p} x^2t(1-\theta y^2)\over (\theta(\theta^{2^p-1} x^2-y^2)t+(1-\theta y^2)u+\theta y^2)^2}=0,
 \end{split}
 \end{equation}
  \begin{equation}\label{ht}
   \begin{split}
   f_t'(t,u,\theta,p)&={\theta x^2u(\theta^{2^p-1}-y^2)+\theta x^2y^2\over ((x^2-\theta y^2)t+\theta(\theta^{2^p-1}-y^2)u+\theta y^2)^2}\\
  &\qquad-{\theta^{2^p} x^2u(1-\theta y^2)+\theta^{2^p+1}x^2y^2\over (\theta(\theta^{2^p-1} x^2-y^2)t+(1-\theta y^2)u+\theta y^2)^2}=0.
   \end{split}
  \end{equation}
  From~\eqref{hu} one has either $t=0$, or if $t\ne 0$
  we note that if $y^2=1/\theta$ then $y^2=\theta^{2^p-1}$, i.e., $\theta=1$.
 So we can assume $y^2\ne 1/\theta$.  Then from~\eqref{hu}
  (for $t\ne 0$) and~\eqref{ht} we get
 $${\theta^{2^p-1}-y^2\over 1-\theta y^2}= {(\theta^{2^p-1}-y^2)u+y^2\over (1-\theta y^2)u+\theta y^2},$$
 which is possible only iff $\theta=1$. So it remains to check only the case $t=0$, which gives a minimum ($=0$) of the function $f$. Hence the maximal value of $f$ is reached on the boundary of the set
 $\{(t,u)\in [0,1]^2: t+u\leq 1\}$.  
 We note that similar results hold for the function $\varphi$ too.  
 We discuss the three line segments
 of the boundary separately:

 {\it Case: $t=0$.} In this case it was already
 mentioned above that the function has a minimum which is equal to zero.

{\it Case: $u=0$.} In this case simple calculations show that
\begin{align*}
\max  f(t,0,\theta,p) &=f\left({y^2\over \theta^{2^{p-1}-1}x^2+y^2},0,\theta,p\right)={1-\theta^{2^{p-1}}\over 1+\theta^{2^{p-1}}}\text{ and }\\
\max  \varphi(t,0,\theta,p) &=\varphi\left({y^2\over x^2+y^2},0,\theta,p\right)={1-\theta\over 1+\theta}.
\end{align*}

{\it Case: $t+u=1$.} In this case we have 
\begin{align*}
\max  f(t,1-t,\theta,p) &=f\left({1\over 1+x^2},{x^2\over 1+x^2},\theta,p\right)={1-\theta^{2^p}\over 1+\theta^{2^p}}\text{ and}\\
\max  \varphi(t,1-t,\theta,p) &=\varphi\left({\theta^{2^{p-1}}\over \theta^{2^{p-1}}+x^2},{x^2\over \theta^{2^{p-1}}+x^2},\theta,p\right)={1-\theta^{2^{p-1}}\over 1+\theta^{2^{p-1}}}.
\end{align*}
Similarly for $\psi$ one can show that 
$$|\psi(t,u,\theta,p)|\leq \max\left\{{|1-\theta|\over 1+\theta},  {\left|1-\theta^{2^{p-1}-1}\right|\over 1+\theta^{2^{p-1}-1}}\right\}.$$
Next, for $\theta<1$ and $t>-1$ consider the following function
$$\Theta(t)={1-\theta^t\over 1+\theta^t}.$$
It is easy to check that this function is monotone increasing and therefore we have 
$$ \max\left\{{1-\theta\over 1+\theta},  {\left|1-\theta^{2^{p-1}-1}\right|\over 1+\theta^{2^{p-1}-1}}, {1-\theta^{2^{p-1}}\over 1+\theta^{2^{p-1}}}\right\}\leq  
	{1-\theta^{2^p}\over 1+\theta^{2^p}}.$$ This completes the proof for $f$. 
 For $g$ the proof is very similar.
 \end{proof}

In the following proposition we now present our bound on $\gamma$.
 \begin{pro} \label{peg} Independent of the possible
 values of $(x,y)$ (i.e., the solutions to the system~\eqref{rs3.2a} and \eqref{rs3.2b}) for  $\theta<1$, $p>0$ we have
 \begin{equation}
 \label{gam}
 \gamma(x,y)\leq  {1-\theta^{2^p}\over 1+\theta^{2^p}}.
 \end{equation}
 \end{pro}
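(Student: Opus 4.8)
The plan is to bound $\gamma(x,y)$ directly from its definition as a supremum of variational distances $\|\mu^{\eta^{w,s}}_A-\mu^{\eta^{w,s'}}_A\|_v$ over all finite $A$, boundary conditions $\eta$, boundary sites $w\in\partial A$, neighbors $v\in A$, and spin pairs $(s,s')$. The key structural observation is that, once we condition on the spin at the single boundary vertex $w$, the induced marginal at the neighbor $v$ depends on $\eta$ and $A$ only through the effective ``free'' distribution $(p_0,p_1,p_2)$ that the rest of $A$ (together with its boundary) transmits to $v$; the one-step channel $\mathbb P$ then mixes this with the pinned spin at $w$. This is exactly the setup of Lemma~\ref{pd}: for \emph{any} such $A,\eta,w,v$ we have
\begin{equation*}
\|\mu^{\eta^{w,s_1}}_A-\mu^{\eta^{w,s_2}}_A\|_v\le
\max\{p^0(0)-p^1(0),\,p^0(0)-p^2(0),\,|p^i(1)-p^j(1)|,\,p^2(2)-p^0(2),\,p^2(2)-p^1(2)\},
\end{equation*}
where the $p^t(s)$ are the quantities from~\eqref{ppp}, built from an arbitrary probability vector $(p_0,p_1,p_2)$. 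So the first step is simply to invoke Lemma~\ref{pd}, which reduces the estimation of $\gamma$ to a bound that is uniform over all probability vectors $(p_0,p_1,p_2)$ and independent of the combinatorics of $A$.

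The second step is to recognize that the five quantities on the right-hand side are, up to relabeling, exactly the four functions $f,\varphi,\psi,g$ introduced before Lemma~\ref{lK} (using $t=p_0$, $u=p_2$, and the identity $p^i(1)-p^j(1)=p^j(0)-p^i(0)+p^j(2)-p^i(2)$ already noted in the text, which routes the $k=1$ differences into the $k=0$ and $k=2$ cases). Hence
\begin{equation*}
\gamma(x,y)\le\max_{t,u}\{|f(t,u,\theta,p)|,|\varphi(t,u,\theta,p)|,|\psi(t,u,\theta,p)|,|g(t,u,\theta,p)|\},
\end{equation*}
the maximum over $\{(t,u)\in[0,1]^2:t+u\le1\}$. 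Since $\theta<1$, Lemma~\ref{lK} bounds this maximum by $(1-\theta^{2^p})/(1+\theta^{2^p})$, which is precisely~\eqref{gam}. Crucially, the bound in Lemma~\ref{lK} is stated uniformly in $(x,y)$ — the extremal points found there ($u=0$ giving $t=y^2/(\theta^{2^{p-1}-1}x^2+y^2)$, or $t+u=1$ giving $t=1/(1+x^2)$, etc.) depend on $(x,y)$, but the resulting extremal \emph{value} does not — so no case analysis over the solution branches $v_1,\dots,v_7$ is needed.

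The only real subtlety, and the step I would be most careful about, is the passage from ``$w$ is a single boundary vertex'' in the definition of $\gamma$ to ``we may use the one-step formulas~\eqref{ppp} with an arbitrary free distribution $(p_0,p_1,p_2)$''. One must check that pinning $\sigma(w)=t$ and then looking at $\sigma(v)$ really does produce $p^t(s)=P_{ts}\,p_s/\sum_{l}P_{tl}\,p_l$ with $(p_0,p_1,p_2)$ the $A$-marginal at $v$ under the free (unpinned-at-$w$) boundary condition — this is the Markov property of the tree-indexed chain together with the fact that $v$ separates $w$ from the rest of $A$, and it is already implicitly used inside the proof of Lemma~\ref{pd}; so for the proof of the proposition it suffices to cite Lemma~\ref{pd} and Lemma~\ref{lK} and assemble the inequality chain. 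I would therefore write the proof in two short lines: apply Lemma~\ref{pd} to reduce to the supremum over probability vectors, then apply Lemma~\ref{lK} (valid since $\theta<1$) to obtain the stated bound.
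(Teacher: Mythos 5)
Your proposal is correct and follows essentially the same route as the paper, whose proof of Proposition~\ref{peg} is simply the one-line observation that the bound is a corollary of Lemma~\ref{pd}, Lemma~\ref{lem} and Lemma~\ref{lK}. You merely spell out the assembly (reduction to the quantities $p^t(s)$ over arbitrary probability vectors, then the uniform bound on $f,\varphi,\psi,g$) in more detail than the paper does.
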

 \begin{proof}
 This is a corollary of the above-mentioned lemmas.
  \end{proof}

\subsubsection{Computation of $\kappa$.} Now we shall compute the constant $\kappa$.
Since $(x,y)$ is a solution to the system~\eqref{rs3.2a}, \eqref{rs3.2b}, the matrix~\eqref{m}  can be written in the following form
\begin{equation}\label{m1} {\mathbb P}={1\over Z}\left(\begin{array}{ccc}
x&\theta y^2/x&\theta^{2^p}/x\\[2mm]
\theta x^2/y&y&\theta/y\\[2mm]
\theta^{2^p}x^2&\theta y^2&1
\end{array}
\right),
\end{equation}
where $Z=\theta^{2^p}x^2+\theta y^2+1.$
  Using~\eqref{ka} and~\eqref{m1}, we get
  \begin{equation}\label{kac}
   \kappa(x,y) =
  {1\over 2}\max_{i,j}\sum_{l=0}^2|P_{il}-P_{jl}|=
  {1\over 2Z}\max\left\{{x^2|y-\theta x|+y^2|x-\theta y|+|\theta^{2^p}y-\theta x|\over xy},\right.\end{equation}
  $$\left.
 {x^2|1-\theta^{2^p} x|+\theta y^2|1-x|+|\theta^{2^p}-x|\over x},\
  {x^2|\theta-\theta^{2^p}y|+y^2|1-\theta y|+|\theta- y|\over y}\right\}, 
 $$
 where $Z=\theta^{2^p}x^2+\theta y^2+1$.
 We are interested in computing $\kappa(x,y)$ for 
 $$(x,y)\in \{(1,y_1),(1,y_2),(1,y_3),(x_4,y_4),(x_5,y_5),
 (x_6,y_6),(x_7,y_7)\}.$$
Since we have an explicit formula for the solutions $(x,y)$ (mentioned in Section~\ref{sec_3}), the value of $\kappa(x,y)$ will be a function of the parameters $\theta, p$. 
Unfortunately, the explicit formulas for the solutions are very bulky, so we start with $x=1$.

{\bf Case $x=1$.} In this case we have 
$$
\kappa(1,y)= {1\over 2Z_1}\max\left\{{|y-\theta|+y^2|1-\theta y|+|\theta^{2^p}y-\theta|\over y}, 2|1-\theta^{2^p}|\right\},$$
where $Z_1=\theta^{2^p}+\theta y^2+1$ and $y$ is a solution to~\eqref{y3}.

\medskip
{\bf Subcase: $p=0.1$ and for solution $y_1$:}  For the solution $y_1$ of~\eqref{y3} from~\eqref{kac}, by computer analysis, one can see that there exists $\hat \theta_1\approx 0.335$ such that 
\begin{equation}\label{k1}
\kappa(1,y_1)={1\over 2Z_1}\left\{\begin{array}{ll}
	{|y_1-\theta|+y_1^2|1-\theta y_1|+|\theta^{2^p}y_1-\theta|\over y_1}, \ \ \mbox{if} \ \ \theta\in (0, \hat \theta_1)\\[2mm]
	 2|1-\theta^{2^p}|, \ \ \mbox{if} \ \ \theta\geq \hat \theta_1.
 \end{array}\right.
\end{equation}
Denote
$$U_1(\theta, p)=2{1-\theta^{2^p}\over 1+\theta^{2^p}}\kappa(1,y_1)-1.$$
\begin{re} Fig.~\ref{e1} shows that the extremality condition holds for the solution $(1, y_1(\theta,0.1))$, with $\theta\in (0, \theta^*_1)$, where $\theta_1^*\approx 19.08$.
\end{re}
\begin{figure}[h]
	\begin{center}
		\includegraphics[width=7cm]{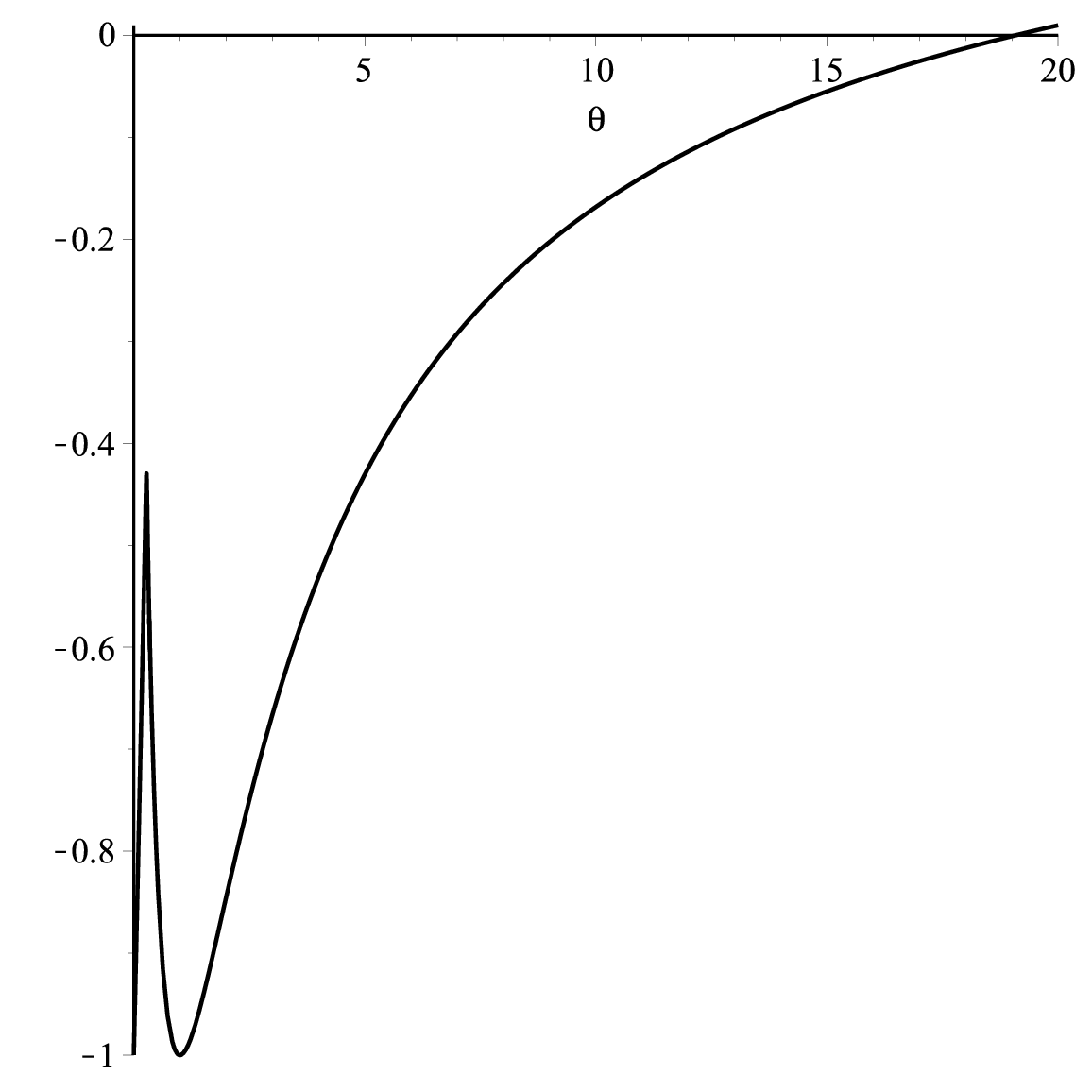} \ \ \ \ \
		\includegraphics[width=7cm]{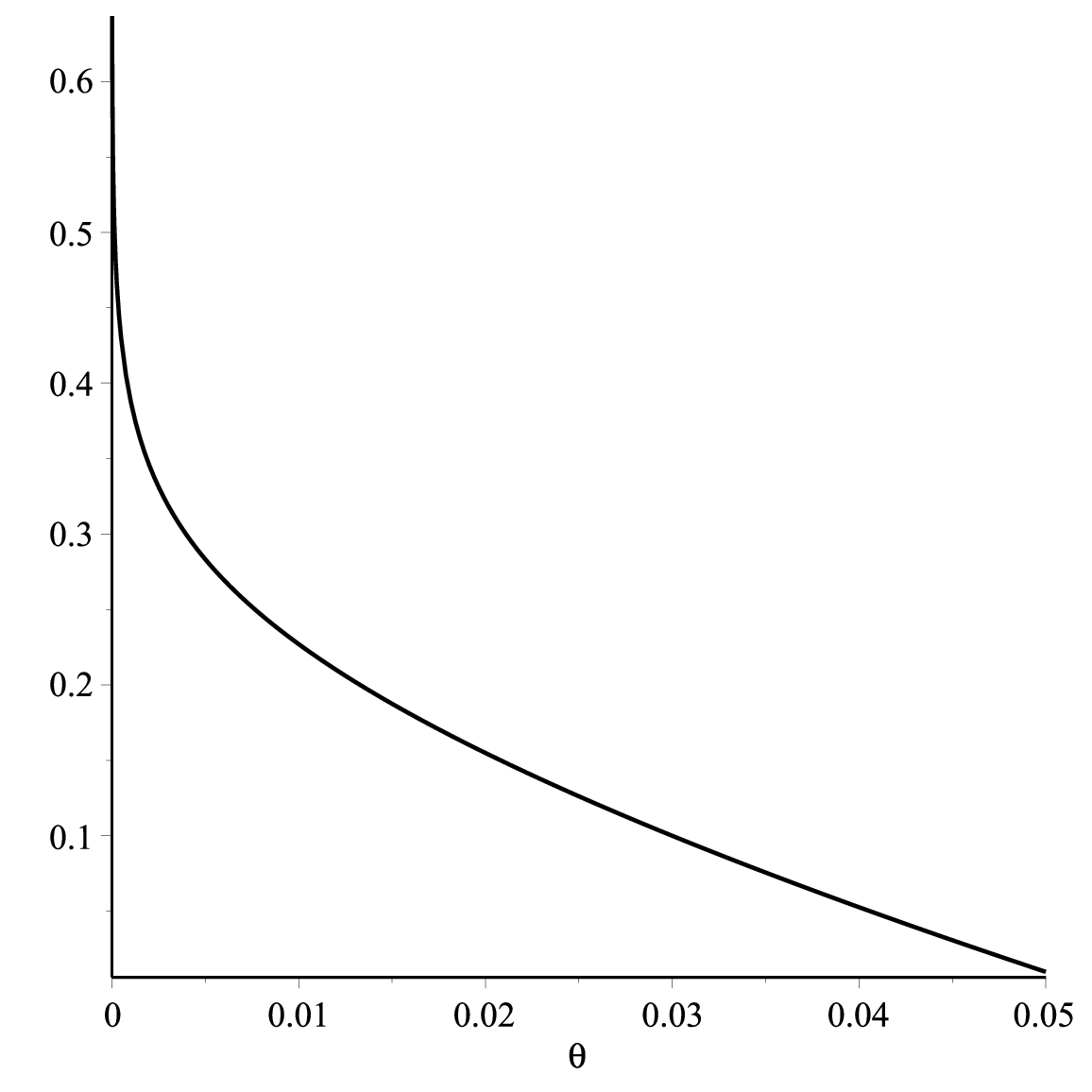}\ \ \
	\end{center}
	\caption{Graphs of the functions $U_1(\theta,0.1)$ for $\theta\in(0, 20)$ (left) and $U_1(1/\theta,0.1)$ for $\theta\in(0, 0.05)$ (right).}\label{e1}
\end{figure}

\medskip
{\bf Subcase: $p=0.1$ and for solutions $y_i$, $i=2,3$:}  For solutions $y_2$ and $y_3$ of~\eqref{y3} from~\eqref{kac},  by computer analysis, we get  
\begin{equation}\label{k2}
	\kappa(1,y_i)={|1-\theta^{2^{0.1}}|\over Z_1}, \ \ i=2,3.
\end{equation}
Denote
$$U_i(\theta, p)=2{1-\theta^{2^p}\over 1+\theta^{2^p}}\kappa(1,y_i)-1.$$
\begin{figure}[h]
	\begin{center}
		\includegraphics[width=7cm]{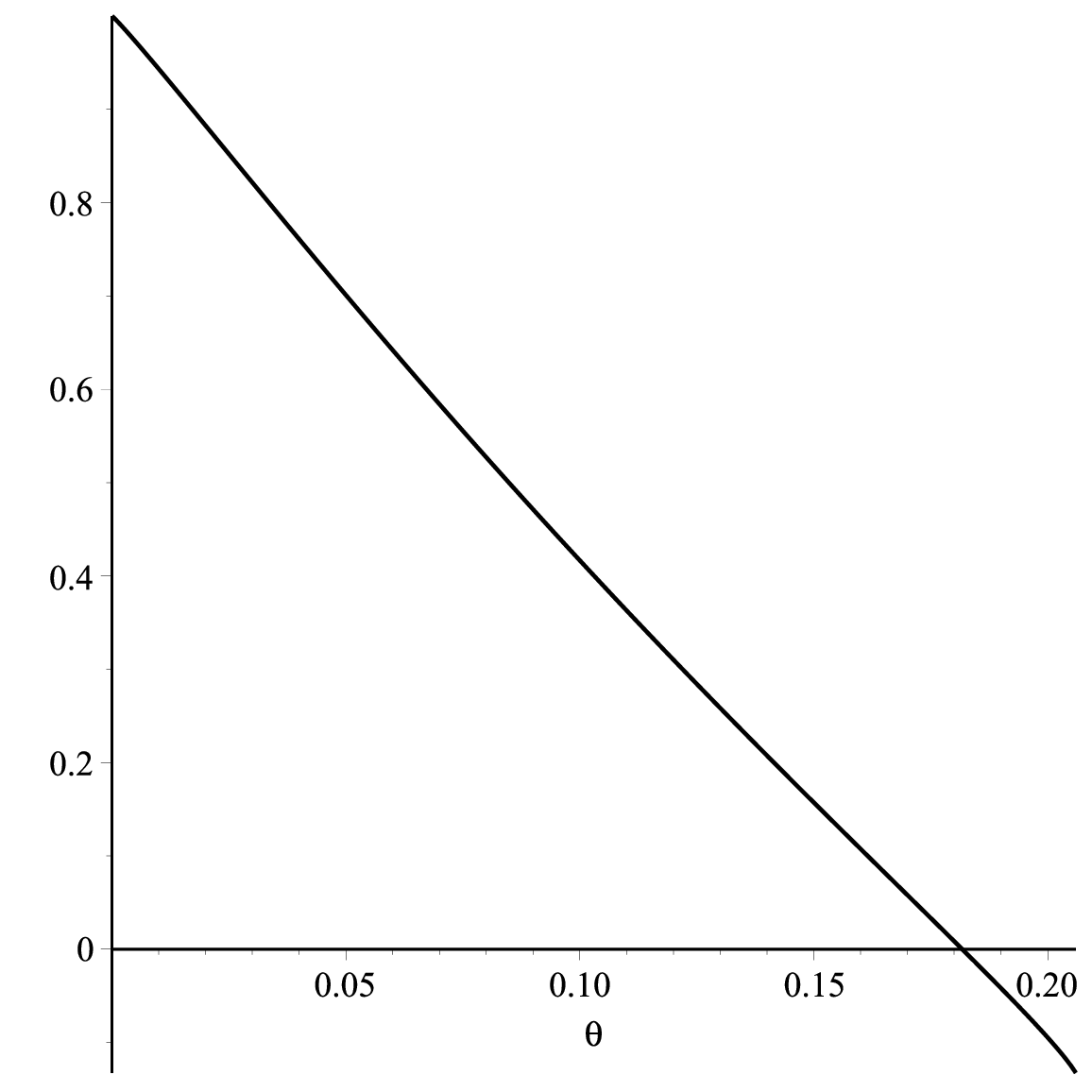} \ \ \ \ \
		\includegraphics[width=7cm]{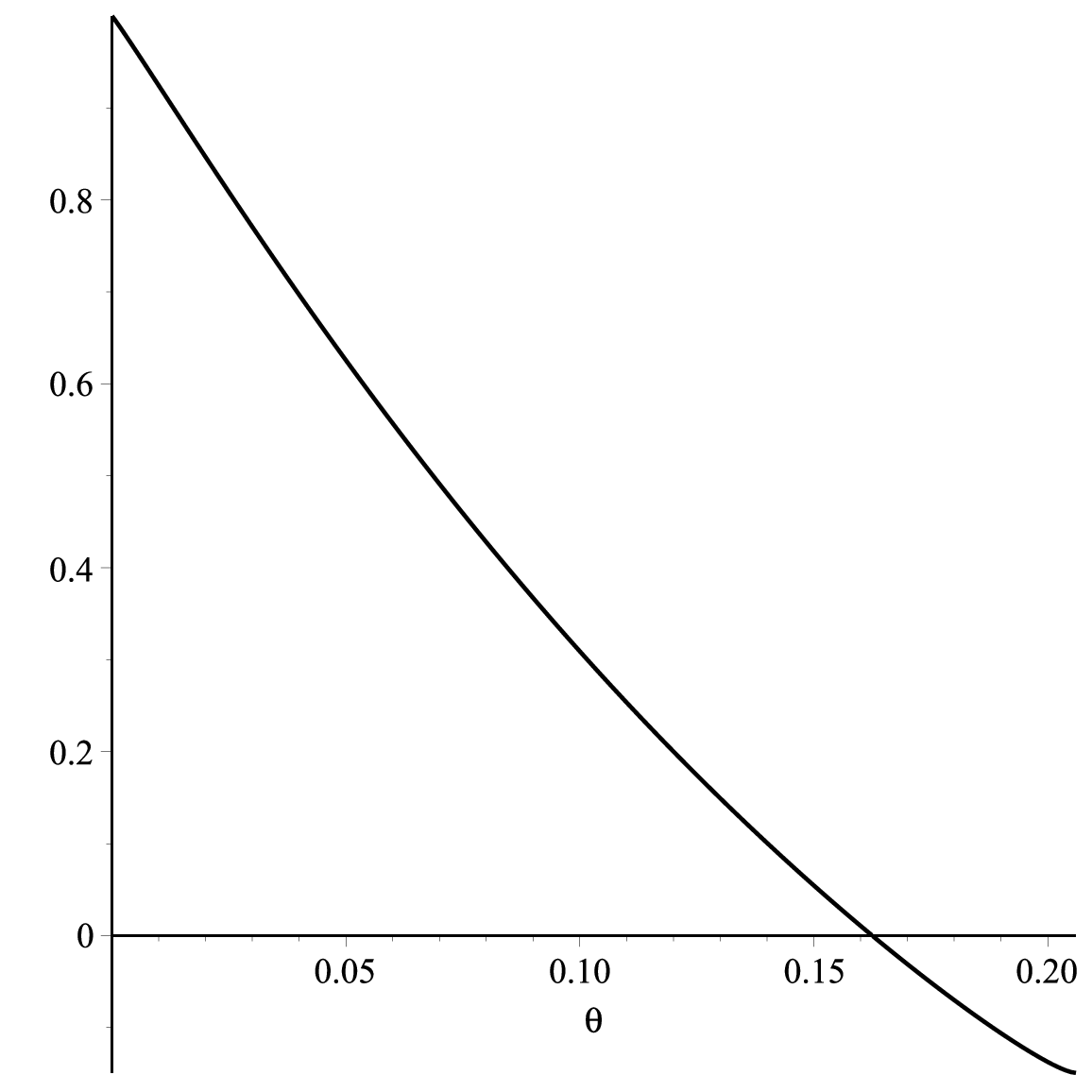}\ \ \
	\end{center}
	\caption{Graphs of the functions $U_2(\theta,0.1)$ (left) and $U_3(\theta, 0.1)$ (right) for $\theta\in(0, \theta_2^*)$. }\label{e2}
\end{figure}
\begin{re} Fig.~\ref{e2} shows that the extremality condition for solutions $(1, y_i(\theta,0.1))$, $i=2,3$, where they exist (i.e., when  $\theta\in (0, \theta_2^*)$), holds for:

- $y_2$ if $\theta\in (\bar \theta_2, \theta_2^*)$, where $\bar\theta_2\approx 0.1817$ and

- $y_3$ if $\theta\in (\bar \theta_3, \theta_2^*)$, where $\bar\theta_3\approx 0.1625$.
\end{re}

{\bf Subcase: $p=10$.} Consider the case $y_1(\theta,10)$, then our computer analysis shows the following:

\begin{re} For the solution $y_1(\theta, 10)$ the extremality condition is satisfied (see Fig.~\ref{e10}) if $\theta\in (0,1)$ and does not hold if $\theta>1$.
\end{re}
\begin{figure}[h!]
	\begin{center}
		\includegraphics[width=7cm]{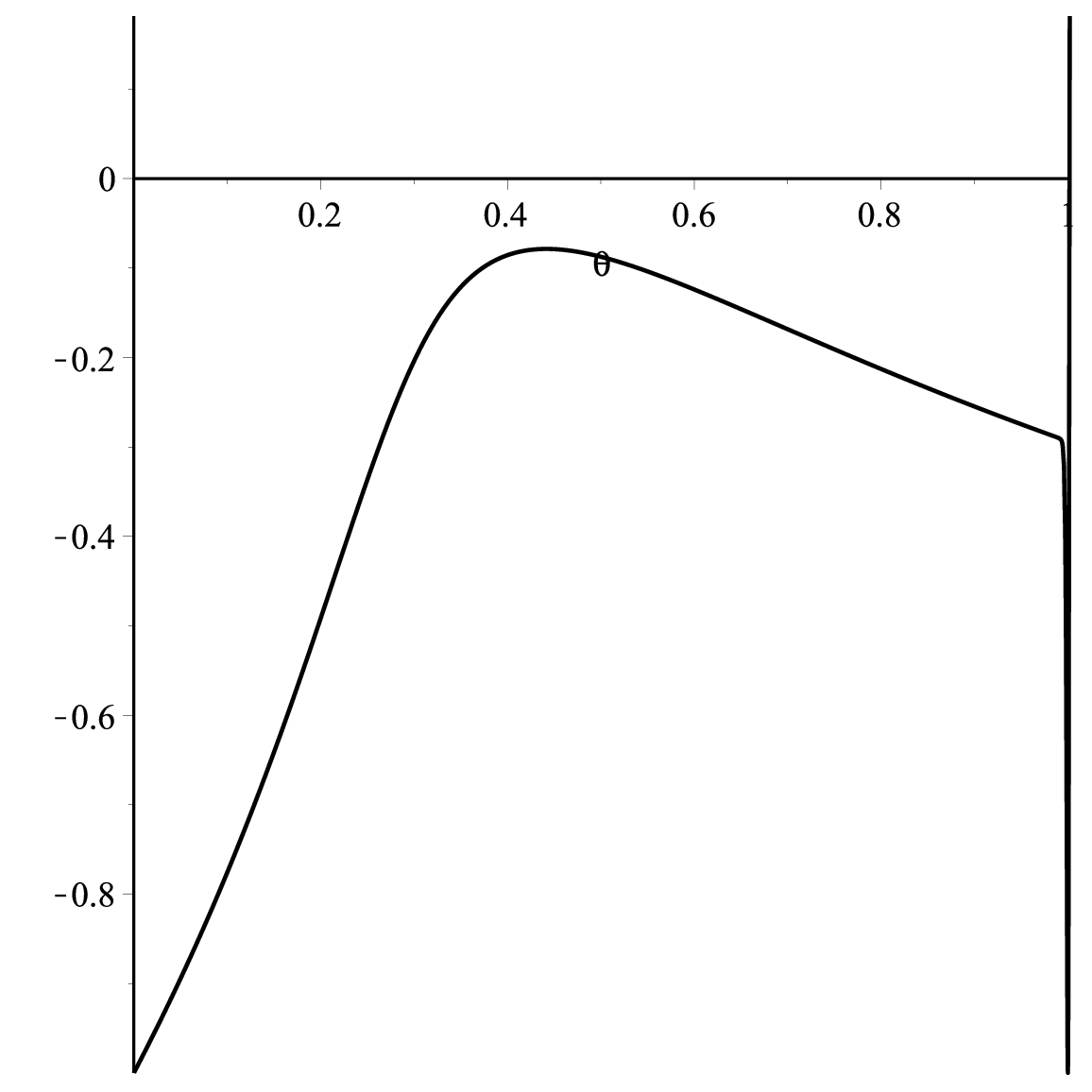} \ \ \ \ \
		\includegraphics[width=7cm]{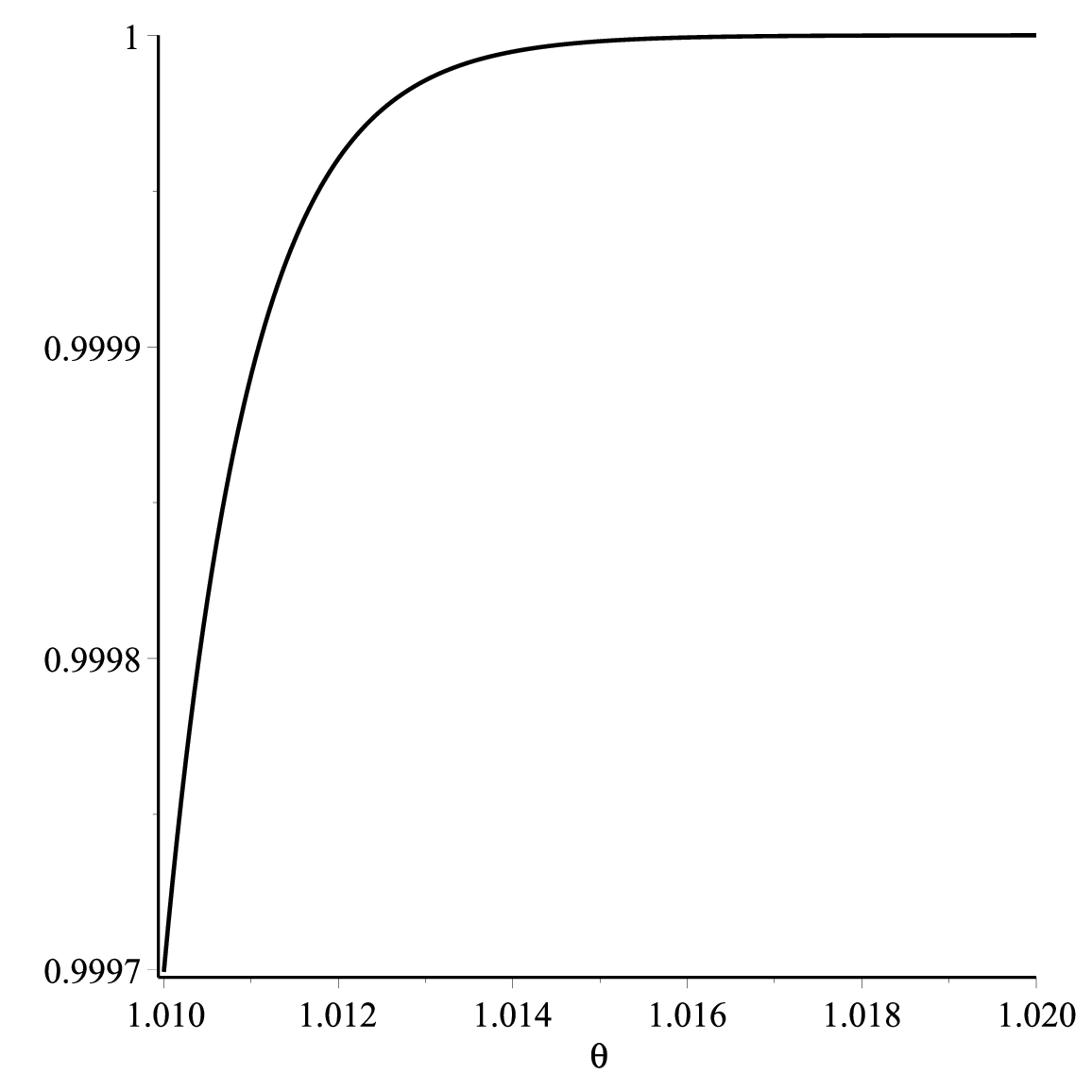}\ \ \
	\end{center}
	\caption{Graph of the function $U_1(\theta,10)$ for $\theta\in(0,1)$ (left) and $\theta>1$ (right). }\label{e10}
\end{figure}
\begin{rk}  In \cite{KR-sos}, it was demonstrated that for $p=1$, the measure $\mu_1(\theta,1)$ associated with $y_1(\theta,1)$ is extreme when $\theta <\approx 2.655$ and non-extreme if $\theta >\approx 2.87$. For $p=0.1$, we have established that $\mu_1(\theta,0.1)$ is extreme when $\theta <\approx 19$ and non-extreme when $\theta>\approx 1523$. However, the unexpected finding emerged in the case of $p=10$, where the critical value distinguishing between extremality and non-extremality is exactly 1. This critical value aligns with the boundary between the ferromagnetic and anti-ferromagnetic cases. 
\end{rk}
\begin{re} For the solutions $y_i(\theta, 10)$, $i=2,3$ Fig.~\ref{e20} shows that the extremality condition is never satisfied. 
\end{re}
\begin{figure}[h!]
	\begin{center}
		\includegraphics[width=7cm]{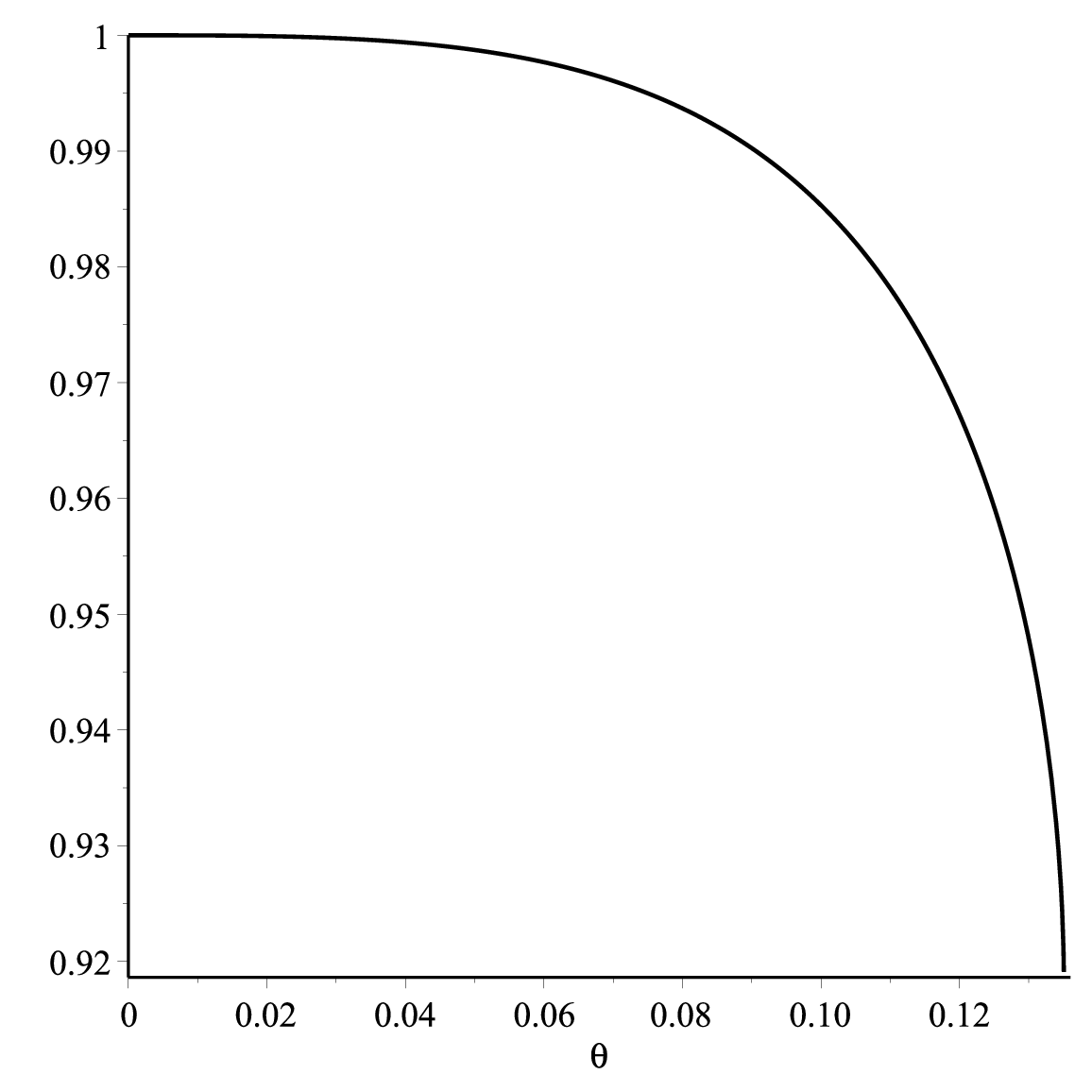} \ \ \ \ \
		\includegraphics[width=7cm]{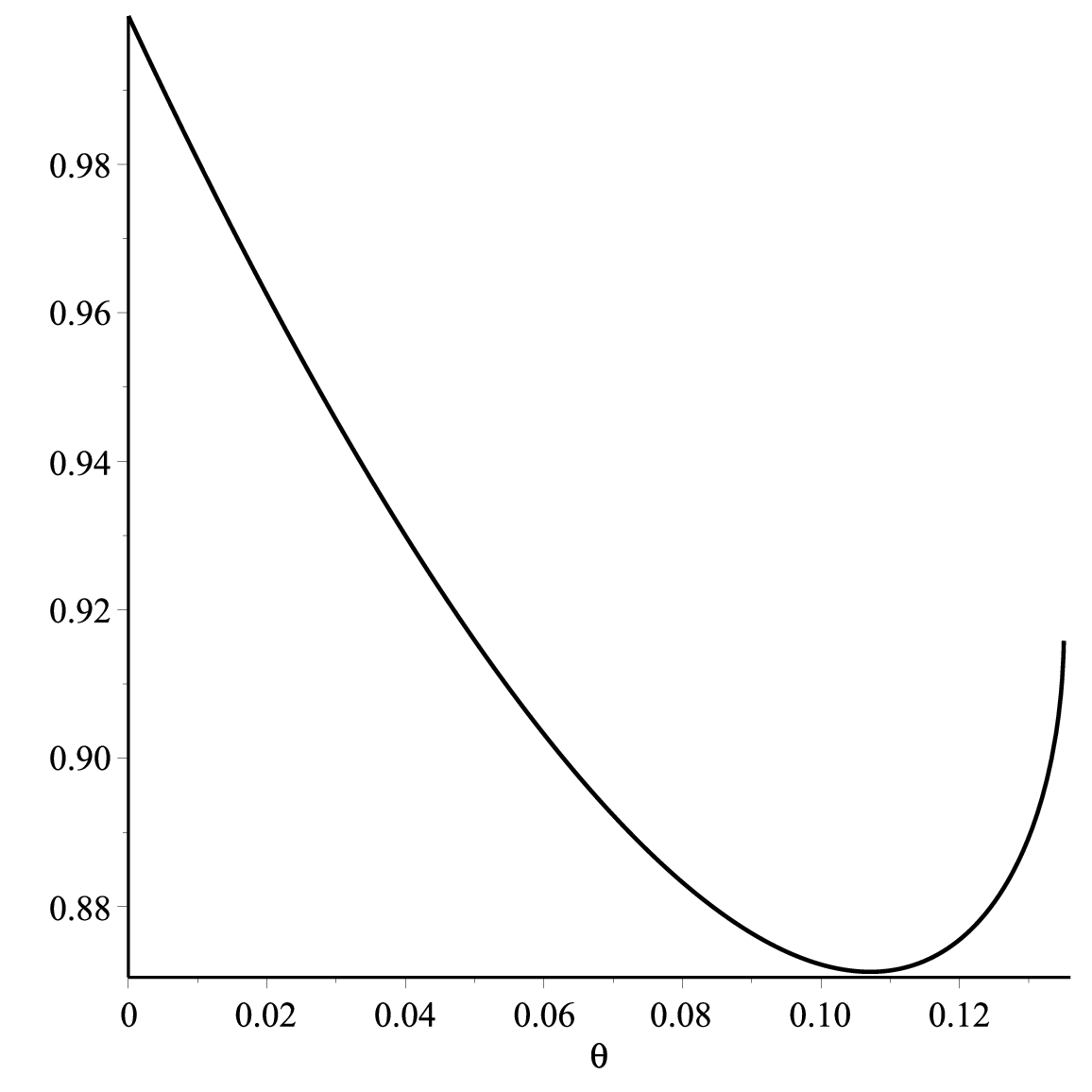}\ \ \
	\end{center}
	\caption{Graphs of the functions $U_2(\theta,10)$ (left) and $U_3(\theta, 10)$ (right) for $\theta\in(0, \theta_2')$. }\label{e20}
\end{figure}

Recall $\mu_i=\mu_i(\theta, p)$ is the SGM corresponding to solution $v_i$. Let us summarize Results 1--8 in the following theorem.
 \begin{thm}\label{tk}
The following holds true for the binary tree.
 \begin{itemize}
 	\item[1.] If $p=0.1$ then
 	\begin{itemize}
	\item[1.a)] There are values $\theta_1^*$ $(\approx 19.08)$ and $\tilde\theta_1$ $(\approx 1523.4)$ such that the measure $\mu_1$ is extreme if $0<\theta<\theta_1^*$ and is non-extreme if $\theta>\tilde\theta_1$.
	\item[1.b)] There are values $\hat\theta_2$ $(\approx 0.175)$ and $\bar\theta_2$ $(\approx 0.1817)$ such that the measure\footnote{Note that $\mu_2$ and $\mu_3$ exist for $\theta\in (0, \theta_2^*)$, where $\theta_2^*\approx 0.206$.} $\mu_2$ is non-extreme if $0<\theta<\hat\theta_2$ and is extreme if $\theta\in (\bar\theta_2, \theta_2^*)$.
	\item[1.c)] There are values $\hat\theta_3$ $(\approx 0.139)$ and $\bar\theta_3$ $(\approx 0.1625)$ such that the measure $\mu_3$ is non-extreme if $0<\theta<\hat\theta_3$ and is extreme if $\theta\in (\bar\theta_3, \theta_2^*)$.
	\end{itemize}
\item[2.] If $p=10$ then
\begin{itemize}
\item[2.a)] The measure $\mu_1$ is extreme if $0<\theta\leq 1$ and is non-extreme if $\theta>1$.
	 \item[2.b)] The measures $\mu_2$ and $\mu_3$ are non-extreme (where they exist).
	\end{itemize}
	\end{itemize}
  \end{thm}

\begin{rk} There is a big gap between the established regions of non-extremality and extremality for the measure $\mu_1$ at $p = 0.1$ (case 1.a) in Theorem \ref{tk}). We conjecture that in the majority of this gap, $\mu_1$ is non-extreme.	
\end{rk}

Let us finally also present a criterion for extremality for the remaining solutions.
 
\medskip

{\bf Case $x\ne 1$.} 
Now we compute $\kappa$ for $(x_i,y_i)$, $i=4,5,6,7$. Recall that all of them exist only for $\theta<1$, therefore, from the system~\eqref{rs3.2a}, \eqref{rs3.2b} we get
the following inequalities
\begin{align*}
y-\theta &={(1-\theta^2)y^2+\theta(1-\theta^{2^p})\over Z}>0,\\
1-\theta^{2^p}x&={(1-\theta^{2^p})\theta y^2+(1-\theta^{2^{p+1}})\over Z}>0,\\
x-\theta^{2^p}&={(1-\theta^{2^p})((\theta^{2^p}+1)x^2+\theta y^2)\over Z}>0,\\
y-\theta&={(1-\theta^{2^p})\theta x^2+(1-\theta^2)y^2)\over Z}>0.
\end{align*}
These inequalities are useful in order to omit the absolute value of the corresponding difference, but still the form of 
$\kappa(x,y)$ remains bulky. Recall that, in order to check the extremality of a given TISGM $\mu_i$ we need to verify that
$2\kappa\gamma<1$. Using the above mentioned bound of $\gamma$ and Formula~\eqref{kac}, it suffices to check
$$2\kappa(x_i,y_i)\gamma(x_i,y_i)\leq  2{1-\theta^{2^p}\over 1+\theta^{2^p}}\kappa(x_i,y_i)<1.$$
Denote
$$U_i(\theta, p)=2{1-\theta^{2^p}\over 1+\theta^{2^p}}\kappa(x_i,y_i)-1\qquad\text{ and}\qquad 
\mathbb E_i=\{(\theta, p): U_i(\theta, p)<0\}.$$
Thus we obtained
the following proposition.

\begin{pro}\label{te}
	If $(\theta, p)\in \mathbb E_i$ is such that $\mu_i$ exists then  $\mu_i$ is extreme.
\end{pro}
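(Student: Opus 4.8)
The plan is to obtain Proposition~\ref{te} as a direct consequence of the Martinelli--Sinclair--Weitz extremality criterion (\cite[Theorem 9.3]{MSW}) together with the bound on $\gamma$ established in Proposition~\ref{peg}. First I would recall that, as already observed, the channel ${\mathbb P}$ in~\eqref{m} attached to any solution $(x_i,y_i)$ of the system~\eqref{rs3.2a}--\eqref{rs3.2b} is ergodic — irreducible and aperiodic, since all of its entries are strictly positive — and permissive, so that \cite[Theorem 9.3]{MSW} applies. It then yields that the tree-indexed Markov chain $\mu_i$ is non-reconstructible, equivalently extreme in the set of all Gibbs measures, as soon as $k\kappa(x_i,y_i)\gamma(x_i,y_i)<1$, where $k=2$ for the binary Cayley tree.

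Next I would feed in the estimate $\gamma(x_i,y_i)\le (1-\theta^{2^p})/(1+\theta^{2^p})$, valid for $\theta<1$ by Proposition~\ref{peg}; note that all solutions $(x_i,y_i)$ with $i\ge 4$ exist only for $\theta<1$ by Remark~\ref{<1}, while for $i=1,2,3$ the regime in which $\mathbb E_i$ is non-trivial is again contained in $\{\theta<1\}$. Substituting this into $k\kappa\gamma=2\kappa\gamma$ gives
\[
2\kappa(x_i,y_i)\gamma(x_i,y_i)\le 2\,\frac{1-\theta^{2^p}}{1+\theta^{2^p}}\,\kappa(x_i,y_i)=U_i(\theta,p)+1,
\]
with $U_i$ as defined just before the statement. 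Hence, whenever $(\theta,p)\in\mathbb E_i$, i.e.\ $U_i(\theta,p)<0$, one has $2\kappa(x_i,y_i)\gamma(x_i,y_i)<1$, and the MSW criterion forces $\mu_i$ to be extreme. This closes the argument.

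The essential work behind this short deduction has in fact already been done: the closed form~\eqref{kac} of $\kappa$ — obtained from the reduced form~\eqref{m1} of the transition matrix and from the sign information carried by the displayed inequalities for $y-\theta$, $1-\theta^{2^p}x$, $x-\theta^{2^p}$ — and the estimation of $\gamma$ through Lemmas~\ref{pd}, \ref{lem} and~\ref{lK}. I do not expect a genuine obstacle in assembling Proposition~\ref{te}; the one point that needs care is that the bound on $\gamma$ in Proposition~\ref{peg} is only proved for $\theta<1$, so one should note that outside this range — the antiferromagnetic regime — there is nothing to prove, since there only the single TISGM $\mu_1$ survives and the definition of $\mathbb E_i$ is invoked solely where the bounds are in force. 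Beyond that, the proposition is a formal corollary of \cite[Theorem 9.3]{MSW} and Proposition~\ref{peg}, and the proof can be written in a single short paragraph.
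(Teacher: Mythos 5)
Your proposal is correct and follows essentially the same route as the paper: Proposition~\ref{te} is obtained there exactly as a formal consequence of the Martinelli--Sinclair--Weitz criterion $k\kappa\gamma<1$ (with $k=2$), the bound $\gamma\le(1-\theta^{2^p})/(1+\theta^{2^p})$ from Proposition~\ref{peg}, and the definition of $U_i$ and $\mathbb E_i$. Your additional remark on restricting to $\theta<1$ where the $\gamma$-bound is proved is a sensible point of care, but does not change the argument.
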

\begin{rk}
	In Propositions~\ref{tne} and~\ref{te}, we were unable to explicitly provide the regions of $(\theta, p)$ for (non-)extremality due to the complex nature of the solutions $(x_i, y_i)$. However, our results may still provide the groundwork for future numerical studies of these regions. In this manuscript we prototyped this analysis for the cases where $p=0.1$, $p=10$ and $x=1$.  
\end{rk}

\section*{Data availability statements}
The datasets generated during and/or analysed during the current study are available from the corresponding author (U.A.Rozikov) on reasonable request.

\section*{Conflicts of interest} The authors declare no conflicts of interest.

 \section*{Acknowledgements}

 B.~Jahnel is supported by the Leibniz Association within the Leibniz Junior Research Group on {\em Probabilistic Methods for Dynamic Communication Networks} as part of the Leibniz Competition (grant no.~J105/2020).
 U.~Rozikov thanks the Weierstrass Institute for Applied Analysis and Stochastics, Berlin, Germany for support of his visit.  His  work was partially supported through a grant from the IMU--CDC and the fundamental project (grant no.~F--FA--2021--425) of The Ministry of Innovative Development of the Republic of Uzbekistan.
 
 We appreciate the referee's valuable comments, which have enhanced the readability of the paper.

\end{document}